\documentclass[a4paper]{article}

\usepackage[margin=1in]{geometry}
\usepackage[pdftex,hyperindex,breaklinks]{hyperref}
\usepackage{amsmath,amsthm}
\usepackage{algorithm}
\usepackage[noend]{algpseudocode}
\usepackage{xspace}
\usepackage[english]{babel}
\usepackage[utf8]{inputenc}
\usepackage[bitstream-charter]{mathdesign}

\newtheorem{thm}{Theorem}[section]   
\newtheorem{lem}[thm]{Lemma}
\newtheorem{cor}[thm]{Corollary}
\newtheorem{prop}[thm]{Proposition}
\newtheorem{definition}[thm]{Definition}
\newtheorem{rem}[thm]{Remark}
\newtheorem*{notations}{Notations}
\theoremstyle{remark}
\newtheorem{example}{Example}

\algnewcommand\Input{\item[\textbf{Input:}]}
\algnewcommand\Output{\item[\textbf{Output:}]}
\newcommand{\K}{\ensuremath{\mathbb{K}}\xspace} 
\newcommand{\gO}{O}
\newcommand{\gOt}{\tilde{O}}
\newcommand{\gOeps}{\gO_\epsilon}
\newcommand{\gOteps}{\gOt_\epsilon}
\newcommand{\cfq}{{\mathbb{F}_{\!q}}}
\newcommand{\cfqs}{{\mathbb{F}_{\!q^s}}}
\newcommand{\gz}{\mathbb{Z}}
\newcommand{\gr}{R}
\newcommand\algo\textsc
\newcommand{\verif}{\algo{VerifySP}\xspace}
\newcommand{\ceil}[1]{\ensuremath{\lceil {#1} \rceil}}
\newcommand{\height}[1]{\|#1\|_\infty}
\newcommand\dilat{^{\smash{[\alpha]}}}
\newcommand\dilatj{^{\smash{[\alpha_j]}}}
\DeclareMathOperator{\supp}{supp}
\DeclareMathOperator{\bquo}{quo}

\title{On Exact Division and Divisibility Testing\\ for Sparse Polynomials}

\author{Pascal Giorgi \hfill Bruno Grenet\hfill Armelle Perret du Cray\\
    LIRMM, Univ. Montpellier, CNRS\\
    Montpellier, France\\
    \{pascal.giorgi,bruno.grenet,armelle.perret-du-cray\}@lirmm.fr}

\begin{document}

\maketitle

\begin{abstract}
  No polynomial-time algorithm is known to test whether a sparse polynomial $G$ divides another sparse polynomial $F$.
  While computing the quotient $Q=F \bquo G$ can be done in polynomial time with respect to the sparsities of $F$, $G$
  and $Q$, this is not yet sufficient to get a polynomial-time divisibility test in general.  Indeed, the sparsity of the quotient
  $Q$ can be exponentially larger than the ones of $F$ and $G$.  In the favorable case where the sparsity $\#Q$ of the quotient is
  polynomial, the best known algorithm to compute $Q$ has a non-linear factor $\#G\#Q$ in the complexity, which is not optimal.

  In this work, we are interested in the two aspects of this problem.  First, we propose a new randomized algorithm that computes
  the quotient of two sparse polynomials when the division is exact. Its complexity is quasi-linear in the sparsities of $F$, $G$
  and $Q$. Our approach relies on sparse interpolation and it works over any finite field or the ring of integers.  Then, as a
  step toward faster divisibility testing, we provide a new polynomial-time algorithm when the divisor has a specific shape.
  More precisely, we reduce the problem to finding a polynomial $S$ such that $QS$ is sparse and testing divisibility by $S$ can
  be done in polynomial time.  We identify some structure patterns in the divisor $G$ for which we can efficiently compute such a
  polynomial~$S$.
  \end{abstract}

\section{Introduction}

The existence of quasi-optimal algorithms for most operations on dense polynomials yields a strong base for fast algorithms in
computer algebra~\cite{MCAlgebra} and more generally in computational mathematics.  The situation is different for
algorithms involving sparse polynomials.  Indeed, the sparse representation of a polynomial
$F = \sum_{i=0}^D f_i X^i \in\gr[X]$ is a list of pairs $(e_i, f_{e_i})$ such that each $f_{e_i}$ is nonzero.
Therefore, the size of the sparse representation of $F$ is $\gO(\#F(B+\log D))$ bits, where $B$ and $\#F$ bound
respectively the size of the coefficients and the number of nonzero coefficients of $F$. 
Polynomial-time algorithms for sparse polynomials need to have
a \mbox{(poly-)}logarithmic dependency on the degree.  
On the one hand, several
$\mathsf{NP}$-hardness results rule out the existence of such fast algorithms unless $\mathsf P=\mathsf{NP}$, for instance for \textsc{gcd} computations~\cite{Plaisted84}. On the other hand, polynomial-time algorithms are known for many important
operations such as multiplication, division or sparse interpolation. We refer to Roche's survey~\cite{Roche2018} for a thorough
discussion on their complexity and on the remaining major open problems.

The main difficulty with sparse polynomial operations is the fact that the size of the output does not exclusively depend on the
size of the inputs, contrary to the dense case. For instance, the product of two polynomials $F$ and $G$ has at most $\#F\#G$
nonzero coefficients. But it may have as few as $2$ nonzero coefficients~\cite{giorgi2020:sparsemul}.
The size growth can be even more dramatic for sparse polynomial division. For instance, the quotient of $F=X^D-1$ by $G=X-1$ is
$F/G=\sum_{i=1}^{D-1} X^i$. The output can therefore be exponentially larger than the inputs. Such a growth is a major 
difficulty to design efficient algorithms for Euclidean division since it is hard to predict the sparsity of the quotient and
the remainder, which can range from constant to exponential.

One important line of work with sparse polynomials is to find algorithms with a quasi-optimal bit complexity
$\gOt(T (\log D + \log C))$ where $T$ is the number of nonzero coefficients of the input and output, $D$ the degree and $\log C$ a
bound on the coefficient bitsize. The problem is trivial for addition and subtraction. For multiplication, though many algorithms have been 
proposed~\cite{ArRo15,Roche2011,MonaganPearce2009,vdHLec2013,vdHLec2012,Johnson74,ColeHariharan,nakos2020},
none of them was quasi-optimal in the general case. Only recently, we 
proposed a quasi-optimal algorithm for the
multiplication of sparse polynomials over finite fields of large characteristic or over the integers~\cite{giorgi2020:sparsemul}. We note that similar results have been given in a more recent preprint, assuming some heuristics~\cite{vdH2020}.

These fast output-sensitive multiplication algorithms strongly rely on \emph{sparse polynomial interpolation}. In the
latter problem, a sparse polynomial is implicitly represented by either a \emph{straight-line program} (SLP) or a
\emph{blackbox}. Though efficient output-sensitive algorithms exist in the blackbox model~\cite{KALTOFEN2003365} they are not well
suited for sparse polynomial arithmetic since one probe of the blackbox is assumed to take a constant time while it is not in our
case. Using sparse interpolation algorithms on SLP is not a trivial solution either since no quasi-optimal bit complexity
bound is known despite the remarkable recent
progress~\cite{Huang2019,ArGiRo16,ArGiRo14,ArGiRo13,ArRo14,vdHLec2014,GiRo2011,GaSch09,BenOrTiwari,vdHLec2019}.
The best known result due to Huang~\cite{Huang2019} has bit complexity $\gOt(L(T\log D \log C))$ to interpolate an
SLP of length $L$ representing a $T$-sparse polynomial of degree at most $D$ with coefficient of size $\log C$.  

In this work, we are interested to use fast sparse interpolation to derive a better complexity bound for sparse polynomial
division, in the special case where the division is exact.
As a second goal, we make progress on the very related problem of testing the divisibility of two sparse
polynomials.

\subsection{Previous work}

\paragraph{Euclidean division of sparse polynomials.}
Let $F = GQ+R \in\K[X]$ where $F$ and $G$ are two polynomials with at most $T$ nonzero coefficients ($\#F$, $\#G\leq T$),
$D=\deg(F) > n=\deg G$, and $\deg R < \deg G$.  Computing $Q$ and $R$ through classic Euclidean division requires $\gO(\#G\#Q)$
operations in $\K$. Yet keeping track of the coefficients of the remainder during the computation dominates the cost, due to many
exponent comparisons. The total complexity is $\gO(\#F + \#Q (\#G)^2)$ using sorted lists, or $\gO(\#F+\#Q\#G\log (\#F+\#Q\#G))$
using binary heaps or the geobucket structure~\cite{yant_geobucket_1998}.
Heap technique has been improved to further lower down the size of the heap.
Johnson proposes an algorithm that uses a heap of size $\#Q+1$~\cite{Johnson74}, and Monagan and Pearce provide a variant that
maintains a heap of size $\gO(\#G)$~\cite{MonaganPearce2007}. The best solution to date for sparse polynomial division is to
switch from a quotient heap to a divisor heap whenever the quotient is getting larger than the divisor. The complexity becomes
$\gO(\#F+\#Q\#G\log\min(\#Q, \#G))$~\cite{MonaganPearce2011}.

To the best of our knowledge, no algorithm has been specifically designed for the special case of exact division.

\paragraph{Sparse divisibility testing.}
The problem of sparse divisibility testing is to determine, given two sparse polynomials $F$ and $G$, whether $G$ divides $F$.  It
is an open problem whether this problem admits a polynomial-time algorithm, that is an algorithm that runs in time
$(T\log D)^{\gO(1)}$ where $T$ bounds the number of nonzero terms of the inputs and $D$ their degrees. We note that the division algorithms do
not settle the problem. Indeed, the quotient of two sparse polynomials $F$ and $G$ can be exponentially larger than $F$ and $G$.

The only general complexity result on this problem is due to Grigoriev, Karpinksi and Odlyzko~\cite{GrigorievKarpinksiOdlyzko96}
who show that the problem is in $\textsf{coNP}$ under the Extended Riemann Hypothesis (ERH). Besides, the problem admits
polynomial-time algorithms in the easy cases where $\deg(G)$, $\deg(F)-\deg(G)$ or $\#Q$ are polynomially
bounded~\cite{Roche2018}.  On the other hand, some related problems are $\textsf{coNP}$-hard, such as the divisibility of a
product of sparse polynomials by a sparse polynomial, the computation of the constant coefficient of the quotient or the degree of
the remainder~\cite{Plaisted84}.

\renewcommand{\thefootnote}{\fnsymbol{footnote}}

\subsection{Our contributions}

We focus on the exact division of sparse polynomials. We first provide algorithms whose bit complexity are quasi-linear in the
input and the output sparsities. Our algorithms work over finite fields and the integers, and are randomized.  Over a finite field
$\cfq$ of characteristic larger than the degree $D$ of the inputs, it computes the quotient of two polynomials in
$\gOteps(T\log D\log q)$\footnote{We let $\gOt(f(n)) = f(n)(\log f(n))^{\gO(1)}$ and
  $\gOteps(f(n)) = \gOt(f(n))\log^{\gO(1)}\frac{1}{\epsilon}$.}  bit operations with probability at least $1-\epsilon$, where $T$
bounds the number of nonzero terms of both the inputs and the output.  For smaller characteristic, the complexity bound is
$\gOteps(T\log^2 D(\log D+\log q))$. For polynomials over $\gz$ with coefficients bounded by $C$ in absolute value, our algorithm
performs $\gOteps(T(\log C+\log D\log S)+\log^3 S)$ bit operations where $S$ is the maximum of $D$ and the absolute value of the
coefficients of the result. Our main technique is to adapt sparse polynomial interpolation algorithms to our needs.  Our work
focuses on the univariate case but it can be straightforwardly extended to the multivariate case using (randomized) Kronecker
substitution~\cite{ArRo14}.  We shall mention that the technique behind our exact division generalizes to the sparse interpolation
of SLPs with divisions.

We also provide a polynomial time algorithm for special cases of the sparse polynomial divisibility testing problem when
$\deg(F) = \gO(\deg(G))$.
We prove that if $G$ contains a small chunk of coefficients, with large \emph{gaps} surrounding it, then one can test in
polynomial time whether $G$ divides $F$. More precisely, we require $G$ to be written as $G_0 + X^k G_1 + X^\ell G_2$ with
$\deg(G_1) = (T\log D)^{\gO(1)}$, $k-\deg(G_0) = \Omega(D)$ and $\ell-\deg(X^k G_1)=\Omega(D)$.  This encompasses polynomials of
the form $G_0 + X^kG_1$ or $G_1+X^\ell G_2$. Our technique is to prove that in this situation, even if the quotient $F/G$ may have
an exponential number of nonzero terms, we are able to efficiently compute a multiple of the quotient that is sparse.

\paragraph{Notations.}
Let $F = \sum_{i=1}^T f_i X^{e_i}$. We use $\#F = T$ to denote its sparsity (number of nonzero terms) and $\supp(F) = \{e_1, \dotsc, e_T\}$ for its support. If $F\in\gz[X]$, we use $\height{F} = \max_i|f_i|$ to denote its height. The \emph{reciprocal} of $F$ is the polynomial $F^\star = X^{\deg(F)} F(1/X)$.

\section{Exact division}\label{sec:exactdiv}

Our method to compute the exact quotient $F/G$ of two sparse polynomials $F$ and $G$ relies on \emph{sparse interpolation}
algorithms. These algorithms usually take as input a straight-line program (SLP), or sometimes a blackbox, representing a sparse
polynomial $Q$, together with bounds on the sparsity and the degree of $Q$. The output is the sparse polynomial $Q$ given by the
list of its nonzero monomials.

There are two main families of sparse polynomial interpolation algorithms. The first one, which originates with the work of
Prony~\cite{Prony} and Ben-Or and Tiwari~\cite{BenOrTiwari}, uses evaluations of $Q$ on geometric progressions. The second one,
initiated by Garg and Schost~\cite{GaSch09}, computes the reduction of $Q$ \emph{modulo} $X^p-1$ for some random prime $p$. This
second line of work is more suitable to our case, to obtain the best complexity bounds.

For polynomials over $\cfq$, we rely on the best known sparse interpolation algorithms 
due to Huang~\cite{Huang2019}
when $q > \deg Q$ and to Arnold, Giesbrecht and Roche~\cite{ArGiRo14} otherwise.
These two algorithms 
compute the reduction of $Q$ \emph{modulo} $X^p-1$ for some random prime $p$. This computation is known as
\emph{SLP probing} and it can use dense polynomial arithmetic when $p$ is small enough. The goal is then to reconstruct $Q$ from
$Q_p = Q\bmod X^p-1$. One difficulty comes from exponent recovery since a monomial $cX^e$ of $Q$ is mapped to $cX^{e\bmod p}$ in
$Q_p$.  A second one, called \emph{exponent collision}, is when two distinct exponents $e_1$, $e_2\in\supp(Q)$ are congruent
\emph{modulo} $p$. These collisions create the monomial $(c_1+c_2)X^{e}$ in $Q_p$, from which neither $c_1X^{e_1}$ nor
$c_2X^{e_2}$ can be directly recovered.

The latter difficulty is handled similarly in~\cite{ArGiRo14} and~\cite{Huang2019}. Taking $p$ at random in a
large enough set of prime numbers, one can show that a substantial fraction of the monomials of $Q$ do not collide during the
reduction \emph{modulo} $X^p-1$. Therefore, working with several random primes $p$ allows the full reconstruction. The two
algorithms mainly differ in the way they overcome the first difficulty.

Huang's very natural approach is to consider the derivative $Q'$ of $Q$~\cite{Huang2019}. If the characteristic of $\cfq$ is
larger than the degree of $Q$, a monomial $cX^e$ of $Q$ is mapped to $ceX^{e-1}$ in $Q'$. Then it is mapped to
$ceX^{(e-1)\bmod p}$ in $[Q']_p= Q' \bmod X^p-1$. Given an SLP for $Q$, one can efficiently compute an SLP for $Q'$ using
automatic differentiation~\cite{BuClSh97}. If the monomial $cX^e$ does not collide \emph{modulo} $X^p-1$, 
it can be retrieved from its images in $Q_p$ and $[Q']_p$ using a mere division on the
coefficients.

With smaller characteristic, Huang's idea is no longer working since not all the integer exponents exist in $\cfq$.  Instead,
Arnold, Giesbrecht and Roche work \emph{modulo} several primes $p_i$ and use the Chinese remainder theorem to recover the
exponent. For, they introduce the \emph{diversification} technique to be able to match the corresponding monomials in
$Q \bmod X^{p_i}-1$. Indeed, replacing $Q$ with $Q(\alpha_j X)$ for several randomly chosen $\alpha_j$'s,  the nonzero
coefficients of $Q(\alpha_j X)$ are pairwise distinct with a good probability.

\medskip

The main difficulty to adapt these approaches to the computation of $Q=F/G$ is that the division in $\cfq[X]/(X^p-1)$ is not
well-defined.
In the next section we show that taking $\alpha$ at random in a sufficiently large set is enough for $G(\alpha X)$ and
$(\alpha X)^p-1$ to be coprime.  We shall mention that this technique may be extended to other sparse polynomial interpolation
algorithms. In particular, this includes slightly faster algorithms~\cite{vdHLec2019}, but they rely on  unproven
\emph{heuristics}.

\subsection{Computation of a reduced quotient} \label{sec:reducedquo}

Given $F$, $G\in\cfq[X]$ such that $F = GQ$, our aim is to compute $Q\bmod X^p-1\in\cfq[X]$ for some prime $p$. Let $F_p = F\bmod X^p-1$,
$G_p = G\bmod X^p-1$ and $Q_p = Q\bmod X^p-1$, then
\begin{equation}\label{eq:quomodp}
F_p = G_pQ_p\bmod X^p-1.
\end{equation}
If $\gcd(G_p, X^p-1) = 1$, then $G_p$ is invertible \emph{modulo} $X^p-1$, and $Q_p$ can be computed. 
Otherwise, Equation~\eqref{eq:quomodp} admits several solutions and does not define $Q_p$. The following lemma defines a
probabilistic approach to overcome the latter difficulty.

\begin{lem}\label{lem:coprime}
  Let $A$ and $B\in\cfq[X]$ be two nonzero polynomials with $B(0)\neq 0$, and $\alpha$ randomly chosen in some extension $\cfqs$ of $\cfq$. Then
  $A(\alpha X)$ and $B(X)$ are coprime with probability at least $1-\deg(A)\deg(B)/q^s$.
\end{lem}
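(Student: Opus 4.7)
The plan is to show that the set of ``bad'' values of $\alpha$ (those for which $A(\alpha X)$ and $B(X)$ share a common factor) is small, by parametrizing them through pairs of roots in the algebraic closure $\overline{\cfq}$.

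First I would reduce non-coprimality to the existence of a common root. If $A(\alpha X)$ and $B(X)$ are not coprime in $\cfq[X]$, then they share an irreducible factor, hence a common root $\beta\in\overline{\cfq}$. The assumption $B(0)\neq 0$ guarantees $\beta\neq 0$, so one can set $\gamma=\alpha\beta$, which is a root of $A$, and recover $\alpha=\gamma/\beta$. Thus every bad $\alpha$ lies in the set
\begin{equation*}
  \mathcal{B} = \{\gamma/\beta : \gamma\in\overline{\cfq},\ A(\gamma)=0,\ \beta\in\overline{\cfq},\ B(\beta)=0\}.
\end{equation*}

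Next I would bound the cardinality of $\mathcal{B}$. Counting roots with multiplicity, $A$ has at most $\deg(A)$ roots and $B$ has at most $\deg(B)$ roots in $\overline{\cfq}$, so $|\mathcal{B}|\leq \deg(A)\deg(B)$. Since $\alpha$ is drawn uniformly from $\cfqs$, which has exactly $q^s$ elements, the probability of landing in $\mathcal{B}\cap\cfqs$ is at most $\deg(A)\deg(B)/q^s$. The complement gives the stated lower bound.

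The only mildly subtle point is handling the edge case $\alpha=0$: if $A(0)=0$ then $A(\alpha X)$ becomes the zero polynomial (not coprime with $B$), but this case is already captured, since $\gamma=0$ is a root of $A$ and $0/\beta=0\in\mathcal{B}$ for any root $\beta$ of $B$; if instead $A(0)\neq 0$ then $\alpha=0$ yields $A(\alpha X)=A(0)$, a nonzero constant, which is coprime with any nonzero $B$, consistent with $0\notin\mathcal{B}$. No genuine obstacle arises; the whole argument is a single application of the pigeonhole idea behind Schwartz--Zippel-style bounds, adapted to the multiplicative parametrization forced by the scaling $X\mapsto \alpha X$.
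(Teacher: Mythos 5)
Your proof is correct and follows essentially the same route as the paper: the paper counts, for each root $\beta$ of $B$ in $\overline{\cfq}$, the at most $\deg(A)$ roots of $A(\beta X)$, which is exactly your set $\mathcal{B}=\{\gamma/\beta : A(\gamma)=0,\ B(\beta)=0\}$ (the hypothesis $B(0)\neq 0$, i.e.\ $\beta\neq 0$, playing the same role in both). The only difference is presentational, namely your explicit parametrization of the bad $\alpha$'s as ratios of roots and the remark on $\alpha=0$, neither of which changes the argument.
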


\begin{proof}
  Let $\beta$ be a root of $B$ in an algebraic closure $\overline\cfq$ of $\cfq$. Then $\beta$ is a root of $A(\alpha X)$ if and
	only if $A(\alpha\beta) = 0$, that is $\alpha$ is a root of $A(\beta X)$. Since $A(\beta X)$ is nonzero and $\deg(A(\beta X)) = \deg(A)$, there exist at
  most $\deg(A)$ roots of $A(\beta X)$ in $\overline\cfq$. Since $B$ has at most $\deg(B)$ roots in $\overline\cfq$, there are at
  most $\deg(A)\deg(B)$ values of $\alpha$ such that there exists a common root $\beta$ of $A(\alpha X)$ and $B(X)$. Therefore,
  with probability at least $1-\deg(A)\deg(B)/q^s$, $A(\alpha X)$ and $B(X)$ do not share a common root in $\overline\cfq$, that
  is they are coprime.
\end{proof}

\begin{notations}
  For $A\in\cfq[X]$, $\alpha\in\cfqs$ and $p\ge 0$, let 
  $A\dilat$ be the polynomial
  $A(\alpha X)$, $A_p$ be the polynomial $A(X)\bmod X^p-1$ and $A_p\dilat$ be the polynomial
  $A\dilat(X)\bmod X^p-1 = A(\alpha X)\bmod X^p-1$.
\end{notations}

We remark that $A\dilat_p \neq A_p(\alpha X)$. The idea is to apply Lemma~\ref{lem:coprime} to $G$ and $X^p-1$. Instead of
applying Equation~\eqref{eq:quomodp} to $F$ and $G$, we apply it to $F\dilat$ and $G\dilat$ to get
$Q_p\dilat = Q(\alpha X)\bmod X^p-1$. In other words, we compute $Q_p\dilat$ from the equation
$F_p\dilat = G_p\dilat Q_p\dilat\bmod X^p-1$. If $\alpha$ is chosen at random in some extension $\cfqs$ of $\cfq$, $G\dilat$
and $X^p-1$ are coprime with probability at least $1-p\deg(G)/q^s$ and $G_p\dilat$ is invertible \emph{modulo} $X^p-1$ with the
same probability. Since we compute $Q\dilat_p$ for any $\alpha$, we can adapt the algorithm of Arnold, Giesbrecht and
Roche~\cite{ArGiRo14}.

In order to adapt Huang's algorithm~\cite{Huang2019}, we need to compute $Q'(X)\bmod X^p-1$. To this end, we rely on the equality
\begin{equation}\label{eq:quoprimemodp}
    [F']_p-[G']_pQ_p \bmod X^p-1 = G_p[Q']_p\bmod X^p-1
\end{equation}
where $[A']_p$ denotes $A'(X)\bmod X^p-1$ for any $A\in\cfq[X]$. We notice that this equation is similar to
Equation~\eqref{eq:quomodp}. Knowing $Q_p$, the equation defines $[Q']_p$ if and only if $G_p$ is invertible \emph{modulo}
$X^p-1$. This means that if $\alpha$ is chosen at random in $\cfqs$, Equations~\eqref{eq:quomodp} and~\eqref{eq:quoprimemodp}
allow to compute both $Q\dilat_p$ and $[(Q\dilat)']_p$ with probability at least $1-p\deg(G)/q^s$, where $[(Q\dilat)']_p$
is the polynomial $(Q(\alpha X))'\bmod X^p-1$. Next lemmas give the cost of these operations.

\begin{lem}\label{lem:Aalphap}
  Let $A\in\cfq[X]$ of degree $D$, sparsity $T$ and $\alpha\in\cfqs$. Then $A\dilat$ can be computed in
  $\gOt(T\log Ds\log q)$ bit operations, and $A\dilat_p$ in $\gO(T\log D\log\log p+Ts\log q)$ more bit operations.
\end{lem}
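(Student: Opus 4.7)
The plan is to compute $A\dilat$ monomial by monomial, and then to derive $A\dilat_p$ from $A\dilat$ by reducing exponents modulo $p$ and merging collisions.

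For the first bound, write $A = \sum_{i=1}^T a_i X^{e_i}$ so that $A\dilat = \sum_{i=1}^T a_i\alpha^{e_i} X^{e_i}$. For each monomial, I would compute $\alpha^{e_i}\in\cfqs$ by binary exponentiation, using $\gO(\log e_i) = \gO(\log D)$ multiplications in $\cfqs$, then multiply once more by $a_i\in\cfq$. Representing $\cfqs$ as $\cfq[Y]/(\pi)$ for some fixed irreducible $\pi$ of degree $s$, each multiplication in $\cfqs$ amounts to a product of two polynomials of degree $<s$ over $\cfq$ followed by a reduction modulo $\pi$, and standard fast polynomial arithmetic yields $\gOt(s\log q)$ bit operations per multiplication. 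Summing over the $T$ monomials gives the announced $\gOt(T\log D\cdot s\log q)$ bound.

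For the second bound, write $A\dilat = \sum_i b_i X^{e_i}$ with $b_i = a_i\alpha^{e_i}\in\cfqs$; then $A\dilat_p = \sum_i b_i X^{e_i\bmod p}$, where colliding terms must be summed. For each exponent $e_i\le D$, I would split the $\gO(\log D)$ bits of $e_i$ into $\gO(\log D/\log p)$ chunks of size $\log p$ and perform Barrett reduction chunk by chunk modulo $p$; with fast integer multiplication each chunk costs $\gO(M(\log p)) = \gO(\log p\log\log p)$ bit operations, so one reduction is $\gO(\log D\log\log p)$, and all $T$ reductions cost $\gO(T\log D\log\log p)$. Once the reduced exponents (each at most $\log p\le \log D$ bits) are known, I would bucket them—for example by radix sort at cost $\gO(T\log p)$, which is absorbed into the previous term—to identify collisions, and sum the at most $T-1$ colliding coefficients in $\cfqs$; each such addition is linear in the bit size of $\cfqs$, i.e. $\gO(s\log q)$, contributing $\gO(Ts\log q)$.

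The step I view as the only non-routine point is the cost of modular reduction: getting the $\log\log p$ factor (rather than $\log p$) requires the chunked Barrett reduction I described, relying on fast multiplication. Everything else—binary exponentiation, the representation of $\cfqs$, radix sort on $\log p$-bit keys, and coefficient summation—is standard, and adding the two contributions yields the stated $\gO(T\log D\log\log p + Ts\log q)$ complement to the $\gOt(T\log D\, s\log q)$ cost of computing $A\dilat$.
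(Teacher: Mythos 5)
Your proof is correct and follows essentially the same route as the paper's: binary exponentiation of $\alpha$ monomial by monomial at $\gOt(s\log q)$ bit operations per $\cfqs$-multiplication for the first bound, then chunked reduction of each $\gO(\log D)$-bit exponent by the $\log p$-bit modulus plus $\gO(T)$ coefficient additions in $\cfqs$ for the second. You merely spell out details the paper leaves implicit (Barrett reduction per chunk, bucketing to detect collisions), so there is nothing to correct.
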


\begin{proof}
  Computing $A\dilat = A(\alpha X)$ requires $T$ exponentiations of $\alpha$, that is $\gO(T\log D)$ operations in $\cfqs$, which
  gives a bit complexity of $\gOt(T\log Ds\log q)$. Computing $A\dilat_p$ from $A\dilat$ requires $T$ exponent divisions, 
	that can be performed in $\gO(T\frac{\log D}{\log p})$ divisions on integers of size $\log p$, 
	and $T-1$  coefficient additions.
\end{proof}

\begin{lem}\label{lem:Qalphap}
  Let $F$ and $G\in\cfq[X]$ such that $G$ divides $F$, and let $p\ge 0$ and $\alpha\in\cfqs$ such that $G\dilat$ and $X^p-1$ are
  coprime. If $Q = F/G$, the polynomials $Q_p\dilat$ and $[(Q\dilat)']_p$ can be computed in $\gOt(T\log Ds\log q+ps\log q)$ bit
  operations, where $D = \deg(F)$ and $T$ is a bound on the sparsities of $F$ and $G$.
\end{lem}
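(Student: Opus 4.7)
The plan is to turn both computations into one sparse-to-dense reduction followed by a constant number of dense operations modulo $X^p-1$. The hypothesis that $G\dilat$ is coprime to $X^p-1$ is precisely what makes the two defining equations~\eqref{eq:quomodp} and~\eqref{eq:quoprimemodp} (after substituting $\alpha X$ for $X$) uniquely solvable for $Q\dilat_p$ and $[(Q\dilat)']_p$, so the real work is accounting for the cost of carrying out this inversion with dense arithmetic of degree $p$ in $\cfqs[X]$.

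First I would compute $F\dilat_p$ and $G\dilat_p$ directly from the sparse representations of $F$ and $G$ via Lemma~\ref{lem:Aalphap}, which costs $\gOt(T\log D\, s\log q)$ bit operations. Next, I would observe that $\gcd(G\dilat_p,X^p-1)=\gcd(G\dilat,X^p-1)=1$ by hypothesis, so $G\dilat_p$ is invertible modulo $X^p-1$. Applying Equation~\eqref{eq:quomodp} to $F\dilat$ and $G\dilat$ gives
\begin{equation*}
    Q\dilat_p \;=\; F\dilat_p \cdot (G\dilat_p)^{-1} \bmod (X^p-1),
\end{equation*}
which is an inversion plus a multiplication modulo $X^p-1$. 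Using a fast inversion algorithm and fast multiplication in $\cfqs[X]$, both steps cost $\gOt(p)$ operations in $\cfqs$, i.e., $\gOt(ps\log q)$ bit operations.

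For $[(Q\dilat)']_p$ I would reuse Equation~\eqref{eq:quoprimemodp} with $F\dilat$ and $G\dilat$ in place of $F$ and $G$: rearranging gives
\begin{equation*}
    [(Q\dilat)']_p \;=\; \bigl([(F\dilat)']_p - [(G\dilat)']_p\, Q\dilat_p\bigr) \cdot (G\dilat_p)^{-1} \bmod (X^p-1).
\end{equation*}
The two polynomials $[(F\dilat)']_p$ and $[(G\dilat)']_p$ are sparse of sparsity at most $T$ and can be assembled term by term from the monomials of $F$ and $G$ at the same asymptotic cost as $F\dilat_p$ and $G\dilat_p$, again by Lemma~\ref{lem:Aalphap}. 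The subsequent dense multiplication and second division modulo $X^p-1$ each add another $\gOt(ps\log q)$ bit operations, and the inverse of $G\dilat_p$ is already in hand so need not be recomputed.

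Summing the two contributions yields the announced bound $\gOt(T\log D\, s\log q+ps\log q)$. There is no real obstacle here beyond being careful that the $\alpha$-dilation commutes with the two key identities so that the hypothesis on $\gcd(G\dilat,X^p-1)$ is exactly what is needed; the only mild subtlety is noticing that one must take derivatives of the dilated polynomials $F\dilat,G\dilat$ rather than of $F,G$ themselves, and that these derivatives are still $T$-sparse with the same supports.
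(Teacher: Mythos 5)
Your proposal is correct and follows essentially the same route as the paper: compute $F\dilat_p$ and $G\dilat_p$ sparsely via Lemma~\ref{lem:Aalphap}, invert $G\dilat_p$ modulo $X^p-1$ with dense arithmetic to get $Q\dilat_p$, then obtain $[(Q\dilat)']_p$ from Equation~\eqref{eq:quoprimemodp} applied to the dilated polynomials, with all dense steps costing $\gOt(ps\log q)$ bit operations. The only cosmetic difference is that you make the reuse of $(G\dilat_p)^{-1}$ and the sparsity of the dilated derivatives explicit, which the paper leaves implicit.
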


\begin{proof}
  To get $Q\dilat_p$, the first step computes $F\dilat_p$ and $G\dilat_p$. Then we invert $G\dilat_p$ \emph{modulo} $X^p-1$
  using dense arithmetic and we multiply the result by $F\dilat_p$.  Then to get $[(Q\dilat)']_p$, we compute the derivatives of
  $F\dilat$ and $G\dilat$ and perform two more multiplications and one addition of dense polynomials, according to
  Equation~\eqref{eq:quoprimemodp}. All dense polynomial operations cost $\gOt(ps\log q)$ bit operations while the first step cost
  is given by Lemma~\ref{lem:Aalphap}. This concludes the proof since derivative cost is negligible.
\end{proof}

Huang's algorithm recovers monomials of $Q$ from $Q_p$ and $[Q']_p$. In our approach, we compute $Q\dilat_p$ and
$[(Q\dilat)']_p$ instead, and thus recover monomials of $Q\dilat$ instead of $Q$. Yet, if $c X^e$ is a monomial of $Q\dilat$,
the corresponding monomial in $Q$ is $c\alpha^{-e}X^e$ and it can be computed in $\gOt(\log(e)s\log q) = \gO(\log D s\log q)$ bit
operations.

\subsection{Case of large characteristic}

We first consider the case where the characteristic of $\cfq$ is larger than the degree $D$ of the input polynomials.  We begin
with the main ingredient of Huang's algorithm to further adapt it to our needs. Recall that the idea is to recover $Q$ from $Q_p$
and $[Q']_p$.

\begin{definition}
  Let $Q\in\cfq[X]$ and $p$ a prime number. Then $\algo{DLift}(Q_p, [Q']_p)$ is the polynomial $\hat Q = \sum_e cX^e$ where
  the sum ranges over all the integers $e$ such that $Q_p$ contains the monomial $cX^{e\bmod p}$ and $[Q']_p$ contains the
  monomial $ce X^{(e-1)\bmod p}$.
\end{definition}

Clearly, if one knows both $Q_p$ and $[Q']_p$, $\algo{DLift}(Q_p, [Q']_p)$ can be computed in $\gOt(p\log q)$ bit operations.
Next lemma revamps the core of Huang's result~\cite{Huang2019}. Similar results are used in several interpolation
algorithms~\cite{GiRo2011,ArGiRo13,HuangGao2020}.

\begin{lem}\label{lem:Huang}
  Let $Q\in\cfq[X]$ of degree at most $D$ and sparsity at most $T$. Let $p_1$, \dots, $p_k$ be randomly chosen among the first $N$
  prime numbers, where $N=\max(1,\lceil 12(T-1)\log D\rceil)$. Let $i$ that maximizes $\#Q_{p_i}$ and
  $\hat Q = \algo{DLift}(Q_{p_i}, (Q')_{p_i})$. Then with probability at least $1-2^{-k}$, $\#(Q-\hat Q) \le T/2$.
\end{lem}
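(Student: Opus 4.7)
The plan is to combine a combinatorial analysis of \algo{DLift} with an elementary concentration argument over random primes. First, I would set up the notion of collision: call a residue $r\in\{0,\dotsc,p-1\}$ \emph{colliding} for $p$ if at least two exponents of $Q$ reduce to $r$ modulo $p$, call an exponent $e_j$ colliding if its residue is, and let $C_p$ denote the number of colliding exponents and $T_p^+$ the number of colliding residues, so that $T_p^+\le C_p/2$. For every non-colliding exponent $e_j$ with coefficient $c_j$, one directly checks $Q_p$ contains $c_jX^{e_j\bmod p}$ and $[Q']_p$ contains $c_je_jX^{(e_j-1)\bmod p}$, hence \algo{DLift} includes $c_jX^{e_j}$ in $\hat Q$. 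For each colliding residue $r$, once a candidate coefficient $c=Q_p[r]$ is read off, the exponent $e$ in the definition of \algo{DLift} is forced by $ce=[Q']_p[r-1\bmod p]$, so at most one monomial is produced per colliding class. Therefore $\#(Q-\hat Q)\le C_p+T_p^+\le \tfrac{3}{2}C_p$.

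Next I would translate the selection rule $i=\arg\max_i \#Q_{p_i}$ into a condition on $C_p$. Assuming that no accidental zero-sum cancellation happens in $Q_p$ (which is automatic when the characteristic is large enough not to affect short sums of the $c_j$'s), one has $\#Q_p=(T-C_p)+T_p^+\le T-C_p/2$. Hence $\#Q_p\ge 5T/6$ implies $C_p\le T/3$, which by the previous paragraph gives $\#(Q-\hat Q)\le T/2$. It therefore suffices to show that for each random prime $p$, $\#Q_p\ge 5T/6$ holds with probability at least $1/2$.

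The probabilistic step is the classical prime-collision bound. For any pair $e_i\ne e_j$, the nonzero integer $|e_i-e_j|\le D$ has at most $\log_2 D$ distinct prime divisors, so a prime $p$ drawn uniformly from the first $N$ primes satisfies $p\mid e_i-e_j$ with probability at most $\log_2 D/N$. Letting $X_{ij}$ denote the indicator of $e_i\equiv e_j\pmod p$ and noting that each colliding exponent belongs to at least one colliding pair, I get
\[
\mathbb{E}[C_p]\ \le\ 2\sum_{i<j}\mathbb{E}[X_{ij}]\ \le\ 2\binom{T}{2}\frac{\log_2 D}{N}\ \le\ \frac{T}{12},
\]
once $N\ge 12(T-1)\log D$. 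By Markov's inequality, $\Pr[C_p>T/3]\le 1/4$, so $\Pr[\#Q_p<5T/6]\le 1/2$.

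Finally, with the $k$ primes chosen independently, the events $\{\#Q_{p_i}<5T/6\}$ are independent, so
\[
\Pr\!\bigl[\max_i\#Q_{p_i}<5T/6\bigr]\ \le\ 2^{-k}.
\]
On the complementary event, the index realizing the maximum satisfies $\#Q_{p_i}\ge 5T/6$, hence $C_{p_i}\le T/3$ and $\#(Q-\hat Q)\le T/2$. The main obstacle I anticipate is the cancellation caveat: if the characteristic is small enough that sums of some of the $c_j$'s vanish, then $\#Q_p<T_p$ and the bound $\#Q_p\le T-C_p/2$ loosens. This must either be explicitly ruled out (it is harmless in the large-characteristic regime where \algo{DLift} is applied) or absorbed by slightly weakening the constants, without affecting the $1-2^{-k}$ success probability.
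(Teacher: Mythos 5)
Your overall architecture is sound, and it is essentially the standard collision-counting argument behind the result the paper imports from Huang~\cite{Huang2019} without reproving it: bound the damage of \algo{DLift} by the number $C_p$ of colliding exponents via $\#(Q-\hat Q)\le C_p+T_p^+\le\tfrac{3}{2}C_p$, certify that $C_p$ is small for the \emph{selected} prime through the observable quantity $\#Q_p$ using $\#Q_p\le (T-C_p)+T_p^+\le T-C_p/2$, bound $\mathbb{E}[C_p]\le T(T-1)\log_2 D/N\le T/12$, and amplify over the $k$ independent primes. The constants even match the ones forced by $N=\lceil 12(T-1)\log D\rceil$.

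However, one probabilistic step is mis-derived and fails as written: from $\Pr[C_p>T/3]\le 1/4$ you conclude $\Pr[\#Q_p<5T/6]\le 1/2$. That inference needs the implication $C_p\le T/3\Rightarrow \#Q_p\ge 5T/6$, which is the \emph{converse} of the implication you actually established, and it is false in general: if the $C_p$ colliding exponents all share one residue class, then $\#Q_p$ can be as small as $T-C_p+1\approx 2T/3<5T/6$. The repair is local and uses only what you already have: the $T-C_p$ non-colliding residues carry the nonzero coefficients $c_j$, so $\#Q_p\ge T-C_p$ holds unconditionally, and hence $\Pr[\#Q_p<5T/6]\le\Pr[C_p>T/6]\le \mathbb{E}[C_p]/(T/6)\le 1/2$ by Markov at threshold $T/6$; the threshold $T/3$ is only used in the other direction, to certify $C_p\le T/3$ for the prime maximizing $\#Q_{p_i}$. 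Relatedly, your cancellation caveat is moot for the counting: cancellations can only occur at colliding residues, so both inequalities you need, $\#Q_p\ge T-C_p$ and $\#Q_p\le T-C_p/2$, hold with or without cancellation. The characteristic only enters in the \algo{DLift} recovery step itself (uniqueness of the integer $e\le D$ with $ce$ equal to the read-off coefficient, and nonvanishing of $e_j$ in $\cfq$), which is exactly the large-characteristic regime in which the lemma is applied; a possible constant term of $Q$ (where $e_j=0$ kills the derivative coefficient) should be disposed of separately rather than folded into the generic case.
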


The main idea in Huang's algorithm is to use this lemma $\log(T)$ times to recover all the coefficients of $Q$ with probability at
least $(1-2^{-k})^{\log T}$.
To extend the algorithm to our case, we need to compute $Q_p$ and $[Q']_p$ by choosing $\alpha$ in a suitable extension of $\cfq$
and compute $F(\alpha X)/G(\alpha X)\bmod X^p-1$ as explained in Section~\ref{sec:reducedquo}. Next corollary establishes the size
of that extension.

\begin{cor}\label{cor:Huang}
  Let $G\in\cfq[X]$ of degree at most $D$, and $\alpha$ be a random element of $\cfqs$ where
  $s = \lceil\log_q(\frac{965}{\epsilon}D^4)\rceil$. Then with probability at least $1-\epsilon$, $G(\alpha X)$ is coprime with
  $X^p-1$ for each of the $N$ first prime numbers, where $N$ is defined as in Lemma~\ref{lem:Huang}.
\end{cor}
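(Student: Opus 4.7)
The plan is a straightforward union bound over the $N$ primes, combined with the coprimality estimate already established in Lemma~\ref{lem:coprime}.

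First, I would observe that for every prime $p$ the polynomial $B = X^p - 1$ satisfies $B(0) = -1 \neq 0$ and $\deg B = p$, so Lemma~\ref{lem:coprime} (applied with $A = G$) yields that $G(\alpha X)$ and $X^p - 1$ are coprime with probability at least $1 - Dp/q^s$. Taking a union bound over the first $N$ primes $p_1,\dots,p_N$, the probability that $G(\alpha X)$ fails to be coprime to some $X^{p_i}-1$ is at most $(D/q^s)\sum_{i=1}^N p_i$.

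Second, I would bound $\sum_{i=1}^N p_i$ using explicit prime-counting estimates, e.g.\ Rosser's inequality $p_N \le N(\log N + \log\log N)$ for $N \ge 6$, which gives $\sum_{i=1}^N p_i \le N p_N = O(N^2 \log N)$ with an explicit constant. Since $\deg G \le D$ implies $T \le D+1$, the definition $N = \max(1,\lceil 12(T-1)\log D\rceil)$ yields $N = O(D \log D)$, and hence $\sum_{i=1}^N p_i$ is bounded by a concrete constant times $D^2 \log^3 D$. Small cases ($T = 1$, so $N=1$, or small $D$) are handled separately: there the bound reduces to $Dp_1/q^s = 2D/q^s$, which is trivially $\le \epsilon$ under the stated choice of $s$.

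Third, combining the two estimates gives a total failure probability of at most $c\,D^3 \log^3 D / q^s$ for an explicit constant $c$. Bounding $\log^3 D \le D$ (valid for $D$ above a small threshold, the remaining cases being covered by the constant slack) yields a bound of the form $C\,D^4/q^s$, and one then picks the constant $965$ to dominate $C/\epsilon$ so that the choice $q^s \ge 965\,D^4/\epsilon$ indeed forces the failure probability below $\epsilon$.

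The only real obstacle is bookkeeping the explicit constants: the asymptotic argument is immediate from Lemma~\ref{lem:coprime} and the union bound, but producing the specific exponent $4$ and the specific constant $965$ requires combining an explicit Rosser-type upper bound on $p_N$ with an explicit upper bound on $\sum_{i \le N} p_i$, and verifying the small-$D$ (and $T=1$) corner cases by hand.
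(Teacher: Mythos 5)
Your proposal is correct and is essentially the paper's own argument: the paper merely packages your per-prime union bound as a single application of Lemma~\ref{lem:coprime} to the product $\prod_{i}(X^{p_i}-1)$, whose degree is exactly the sum $\sum_i p_i$ you estimate, and then does the same bookkeeping (sum of the first $N$ primes $\le N^2\ln N$, $N\le 12D\log D$, and the numerical check $144\log^2(D)\ln(12D\log D)\le 965\,D$ for all $D\ge 1$). Only a caveat on the constants you deferred: the budget is tight at moderate $D$ (e.g.\ $D=8$), so the lossier estimate $\sum_i p_i\le N p_N$ does not quite fit under $965$ and you indeed need the sharper explicit bound on $\sum_{i\le N}p_i$ that you mention.
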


\begin{proof}
  The polynomial $G(\alpha X)$ is coprime with all these polynomials if and only if it is coprime with their product. The degree
  of their product is the sum of the $N$ first prime numbers, which is bounded by $N^2\ln N$ (for $N > 3$). By
  Lemma~\ref{lem:coprime}, the probability that $G(\alpha X)$ be coprime with this product is at least $1-DN^2\ln N/q^s$ if
  $\alpha$ is chosen at random in $\cfqs$. Since $s\ge \log_q(\frac{144}{\epsilon}D^4)$, $q^s\ge \frac{144}{\epsilon}
  D^4$. Furthermore, since $T\le D$, $N\le 12D\log D$. This implies
\[\frac{DN^2\ln N}{q^s} \le \frac{144D^3\log^2(D)\ln(12D\log D)}{\frac{965}{\epsilon} D^4} \le \epsilon\]
since $144\log^2(D)\ln(12D\log D)\leq 965 D$ for all $D\ge 1$.
\end{proof}

By hypothesis on $\cfq$, we have $q\ge D$. This implies that $s=\gO(1)$ in Corollary~\ref{cor:Huang} as long as
$\frac{1}{\epsilon}$ remains polynomial in $D$. We now provide an algorithm for the exact division of sparse polynomials over large
finite fields, given a bound on the sparsity of the quotient.

\begin{algorithm}
\caption{\textsc{SparseDivLargeCharacteristic}}
\label{algo:sparsedivlargechar}
\begin{algorithmic}[1]
  \Input $F$, $G\in\cfq[X]$ such that $G$ divides $F$; a bound $T$ on $\#F$, $\#G$ and $\#(F/G)$; $0<\epsilon<1$
  \Output $F/G\in\cfq[X]$ with probability $\ge1-\epsilon$ 

\State Let $k = \lceil\log(\frac{2}\epsilon \log T)\rceil$ 
        and $N = \max(1,\lceil 12(T-1)\log D\rceil)$ 
\State Compute the set $\mathcal P$ of the first $N$ prime numbers \label{step:prime}
\State Compute an extension field $\cfqs$ where $s = \lceil\log_q(\frac{1930}{\epsilon}D^4)\rceil$ \label{step:ext}
\State Choose $\alpha$ at random in $\cfqs$
\State Compute $F\dilat = F(\alpha X)$ and $G\dilat = G(\alpha X)$ and set $\hat Q\dilat = 0$
\Loop \ $\lceil\log T\rceil$ times
    \State Choose $p_1$, \dots, $p_k$ at random in $\mathcal P$
    \For{each $p_i$}
        \State Compute $F\dilat_{p_i}$, $G\dilat_{p_i}$ \Comment Lemma~\ref{lem:Aalphap}
        \If{$G\dilat_{p_i}$ is not coprime with $X^{p_i}-1$} \textbf{return} \emph{failure} \EndIf
	\State Compute $Q\dilat_{p_i}$ and $[(Q\dilat)']_{p_i}$ \Comment Lemma~\ref{lem:Qalphap}
    \EndFor
    \State Let $p\in\{p_1,\dotsc,p_k\}$ such that $\#Q\dilat_p$ is maximal
    \State Add $\algo{DLift}(Q\dilat_p-\hat Q\dilat_p, [(Q\dilat)']_p- [(\hat Q\dilat)']_p)$ to $\hat Q\dilat$
\EndLoop
\State Return $\hat Q\dilat(\alpha^{-1}X)$
\end{algorithmic}
\end{algorithm}

\begin{thm}\label{thm:sparsedivlargechar}
Algorithm~\ref{algo:sparsedivlargechar} is correct. It uses $\gOteps(T\log D\log q)$ bit operations.
\end{thm}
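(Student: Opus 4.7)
The plan is to establish correctness via an invariant on the residual $Q\dilat - \hat Q\dilat$, combine the two sources of randomness in a union bound to obtain total failure probability $\epsilon$, and finally account for the bit-complexity step by step.

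For correctness, I would prove by induction on $j$ that after the $j$-th pass through the outer loop, $\#(Q\dilat - \hat Q\dilat) \le T/2^{j}$. The base case $j=0$ holds because $\hat Q\dilat = 0$ and $Q\dilat = Q(\alpha X)$ has the same sparsity as $Q$, which is at most $T$. For the inductive step, linearity of reduction modulo $X^p-1$ and of differentiation shows that the arguments supplied to $\algo{DLift}$ coincide with $(Q\dilat - \hat Q\dilat)_p$ and $[(Q\dilat - \hat Q\dilat)']_p$, so running $\algo{DLift}$ on them is running $\algo{DLift}$ on the residual. Lemma~\ref{lem:Huang} applied to the residual (whose sparsity is at most $T$, so the fixed $N = \lceil 12(T-1)\log D\rceil$ chosen once is large enough at every iteration) then guarantees that with probability at least $1-2^{-k}$ the update halves the residual sparsity. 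After $\lceil\log T\rceil$ iterations the residual is zero, so $\hat Q\dilat = Q(\alpha X)$ and returning $\hat Q\dilat(\alpha^{-1}X)$ yields $Q = F/G$.

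Two sources of randomness must be handled: the single draw $\alpha\in\cfqs$, and the $\lceil\log T\rceil\cdot k$ random primes drawn from $\mathcal P$. The constant $1930$ in the algorithm is exactly twice the constant $965$ in Corollary~\ref{cor:Huang}, so the probability that some $G\dilat_{p_i}$ fails to be coprime with $X^{p_i}-1$ (in which case Lemma~\ref{lem:Qalphap} fails to apply and the algorithm returns \emph{failure}) is bounded by $\epsilon/2$. By the choice $k=\lceil\log(2\log T/\epsilon)\rceil$, a union bound over the outer iterations bounds the probability that $\algo{DLift}$ fails to halve the residual in some iteration by $\lceil\log T\rceil\cdot 2^{-k}\le\epsilon/2$. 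A final union bound gives total failure probability at most $\epsilon$.

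For the bit-complexity, the hypothesis $q\ge D$ yields $s = \gOeps(1)$, absorbed in $\gOteps$. The precomputations (sieving $\mathcal P$, building $\cfqs$, and forming $F\dilat$, $G\dilat$ by Lemma~\ref{lem:Aalphap}) cost $\gOt(T\log D\log q)$. In each inner iteration, $F\dilat_{p_i}$ and $G\dilat_{p_i}$ are obtained in $\gOt(T\log D\log q)$ by Lemma~\ref{lem:Aalphap}, then $Q\dilat_{p_i}$ and $[(Q\dilat)']_{p_i}$ in $\gOt(T\log D\log q + p_i\log q)$ by Lemma~\ref{lem:Qalphap}, which simplifies to $\gOt(T\log D\log q)$ since every prime of $\mathcal P$ is $\gOt(T\log D)$; the coprimality test and the $\algo{DLift}$ update fit the same bound. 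Multiplying by $k=\gOeps(1)$ primes per outer iteration and $\lceil\log T\rceil$ outer iterations, and adding the final change of variable (one exponentiation of $\alpha^{-1}$ per term) which is also $\gOt(T\log D\log q)$, yields the claimed $\gOteps(T\log D\log q)$. The main point to watch is the linearity step that justifies feeding the residual into $\algo{DLift}$; everything else is a routine combination of the lemmas already proved.
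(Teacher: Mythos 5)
Your argument is essentially the paper's own proof: the same two-part union bound (coprimality of $G\dilat$ with all $X^p-1$ via Corollary~\ref{cor:Huang} with the doubled constant giving $\epsilon/2$, and halving of the residual at each of the $\lceil\log T\rceil$ iterations via Lemma~\ref{lem:Huang} and the choice of $k$ giving another $\epsilon/2$), followed by the same step-by-step cost accounting through Lemmas~\ref{lem:Aalphap} and~\ref{lem:Qalphap} using $s\log q=\gOeps(\log q)$ and $p_i=\gOt(T\log D)$. The extra details you spell out (linearity justifying that \algo{DLift} is fed the reductions of the residual, and the cost of the final substitution $X\mapsto\alpha^{-1}X$) are left implicit in the paper but fully consistent with it, so this is correct and the same route.
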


\begin{proof}
  For the algorithm to succeed, $G(\alpha X)$ must be coprime with all the polynomials $X^p-1$ used in the loop. By
  Corollary~\ref{cor:Huang}, it is coprime with all the polynomials $X^p-1$ for $p\in\mathcal P$ with probability at least
  $1-\frac{\epsilon}{2}$. Next, the algorithm succeeds if at each iteration, the number of monomials of $Q\dilat-\hat Q\dilat$
  is halved. According to Lemma~\ref{lem:Huang}, this probability is at least
  $(1-2^{-k})^{\log T}\ge 1-2^{-k}\log T \ge 1-\frac{\epsilon}{2}$. Therefore, the overall probability of success is at least
  $1-\epsilon$.

  Let us first note that $s\log q = \gOeps(\log q)$ since $D\le q$. Step~\ref{step:prime} takes $\gOt(N) = \gOt(T\log D)$ bit
  operations while step~\ref{step:ext} takes $\gOt(s^3\log q) = \gOteps(\log q)$ bit operations. Computing the polynomials
  $F\dilat$ and $G\dilat$ costs $\gOt(T\log Ds\log q)=\gOteps(T\log D\log q)$ bit operations by Lemma~\ref{lem:Aalphap}. In the
  loop, as $p_i = \gO(N\log N)$, computing $F\dilat_{p_i}$ and $G\dilat_{p_i}$ costs $\gOteps(T(\log D+\log q))$ bit
  operations. Then operations on polynomials of degree at most $p$ take $\gOt(ps\log q)$ bit operations, that is
  $\gOteps(T\log D\log q)$ bit operations. Since this must be done for $\gOteps(\log T)$ primes in total, the overall cost of the
  algorithm is $\gOteps(T\log D\log q)$ bit operations.
\end{proof}

\subsection{Case of small characteristic}\label{ssec:smallchar}

When the field $\cfq$ has a characteristic smaller than the degree, Huang's technique is no more possible, and the best
alternative is to use the algorithm of Arnold, Giesbrecht and Roche~\cite{ArGiRo14}.
As mentioned before, their sparse interpolation algorithm computes $Q\dilat_p = Q(\alpha X)\bmod X^p-1$ for several values of
$\alpha$ and $p$ and they use the Chinese Remainder Theorem to recover the coefficients of $Q$.  This is the last part of~\cite[procedure BuildApproximation]{ArGiRo14}, and we denote it by \algo{CrtLift}.
Next lemma summarizes
their approach. It is the combination of~\cite[Lemma~3.1]{ArGiRo14} for the value of $\lambda$,~\cite[Corollary 3.2]{ArGiRo14} for
$\gamma$ and~\cite[Lemma~4.1]{ArGiRo14} for $m$ and $s$.

\begin{lem}[\cite{ArGiRo14}]\label{lem:AGR}
  Let $Q\in\cfq[X]$  a sparse polynomial of degree $D$ and sparsity $T$. Let $0<\mu<1$,
  $\lambda =\max(21,\lceil\frac{40}{3}(T-1)\ln D\rceil)$, $\gamma=\lceil\max(8\log_\lambda D,8\ln\frac{2}{\mu})\rceil$,
  $m=\lceil\log\frac{1}{\mu}+2\log(T(1+\frac{1}{2}\lceil\log_\lambda D\rceil))\rceil$ and $s\ge\log_q(2D+1)$.
  Let $\hat Q = \algo{CrtLift}((Q_{p_i}\dilatj)_{ij})$ where
  $Q_{p_i}\dilatj = Q(\alpha_j X)\bmod X^{p_i}-1$ 
  for random primes $p_1$, \dots, $p_\gamma$ in $]\lambda,2\lambda[$ and 
  random nonzero elements $\alpha_1$, \dots, $\alpha_m$ of $\cfqs$.  Then with
  probability at least $1-\mu$, $\#(Q-\hat Q) \le T/2$.
\end{lem}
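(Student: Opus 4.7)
The plan is to assemble the statement from its three cited ingredients by a straightforward union bound. Intuitively, the procedure may fail for two independent reasons: the random primes $p_i$ may fail to produce enough collision-free exponents to lift via CRT, or the random diversifiers $\alpha_j$ may fail to make the coefficients distinct enough to match monomials across different primes. Each parameter in the statement is tuned to control one of these failure modes to probability $\mu/2$.

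First I would invoke the collision analysis of [ArGiRo14, Lemma~3.1]: for a single random prime $p\in{]\lambda,2\lambda[}$ with $\lambda\geq\frac{40}{3}(T-1)\ln D$, a constant fraction of the exponents in $\supp(Q)$ are, in expectation, free of collisions modulo $p$. Then, using $\gamma$ independent primes and the concentration argument of [ArGiRo14, Corollary~3.2], I would show that with probability at least $1-\mu/2$ there exists at least one index $i$ for which the reductions $Q_{p_i}\dilatj$ carry enough information to recover, via CRT across the $p_i$'s, at least $T/2$ of the original exponents of $Q$. The choice $\gamma=\lceil\max(8\log_\lambda D,8\ln\frac{2}{\mu})\rceil$ is exactly what is needed in that corollary.

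Next I would turn to diversification, which is the part handled by [ArGiRo14, Lemma~4.1]. For each candidate monomial $cX^e$ of $Q$, the coefficient it produces in $Q(\alpha_j X)\bmod X^{p_i}-1$ is $c\alpha_j^e$, and for CRT-lifting to succeed one needs the fingerprint $(c\alpha_1^e,\dotsc,c\alpha_m^e)$ to differ from the corresponding fingerprint of every other monomial. Picking $\alpha_1,\dotsc,\alpha_m$ at random in $\cfqs$ with $s\geq\log_q(2D+1)$ and $m=\lceil\log\frac{1}{\mu}+2\log(T(1+\frac{1}{2}\lceil\log_\lambda D\rceil))\rceil$, that lemma guarantees that all the relevant fingerprints are pairwise distinct with probability at least $1-\mu/2$. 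Conditioned on this event, the \algo{CrtLift} procedure unambiguously reconstructs each non-colliding monomial of $Q$.

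Finally, a union bound over the two failure events gives a total error probability at most $\mu$, and on the complementary event we have $\#(Q-\hat Q)\le T/2$ as claimed. The only real obstacle is bookkeeping: the three cited results from [ArGiRo14] use slightly different probability budgets and constants, so one must check that the stated values of $\lambda$, $\gamma$, $m$ and $s$ indeed simultaneously make both failure probabilities at most $\mu/2$. Since each ingredient is already proven in [ArGiRo14], no new technical work is required beyond this reconciliation.
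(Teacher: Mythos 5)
Your proposal matches the paper's treatment: the paper gives no independent proof of this lemma, but states it precisely as the combination of \cite[Lemma~3.1]{ArGiRo14} (choice of $\lambda$), \cite[Corollary~3.2]{ArGiRo14} (choice of $\gamma$) and \cite[Lemma~4.1]{ArGiRo14} (choices of $m$ and $s$), which is exactly the union-bound assembly you describe. Your only remaining task --- reconciling the probability budgets so the two failure events together stay below $\mu$ --- is bookkeeping already carried out in \cite{ArGiRo14}, so no genuinely different route or new idea is involved.
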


In order to use such a lemma for sparse polynomial division, we need to compute $Q_{p_i}\dilatj$ for the $\gamma$ primes $p_i$
and the $m$ points $\alpha_j$.  As explained in Section~\ref{sec:reducedquo}, $Q_{p_i}\dilatj$ can be efficiently computed as
soon as $G(\alpha_jX)$ and $X^{p_i}-1$ are coprime.  To ensure, with good probability, that the coprimality property holds for every
$p_i$ and $\alpha_j$, we choose the $\alpha_j$'s in a somewhat larger extension of $\cfq$. That is, we need to increase the bound on
$s$ given in Lemma~\ref{lem:AGR} according to Lemma~\ref{lem:coprime}. 

\begin{lem}\label{lem:sizeExtAGR} 
  Let $G\in\cfq[X]$ of degree-$D$, $0<\mu<1$, $\lambda$, $\gamma$, $m$ three integers and $p_1$, \dots, $p_\gamma$ be prime numbers in
  $]\lambda,2\lambda[$. Let $s=\lceil\log_q(2\frac{\lambda\gamma}{\mu}mD)\rceil$ and $\alpha_1$, \dots, $\alpha_m$ be random
  elements of $\cfqs$. Then with probability at least $1-\mu$, $G(\alpha_j)$ and $X^{p_i}-1$ are coprime for all pairs $(i,j)$,
  $1\le i\le \gamma$ and $1\le j\le m$.
\end{lem}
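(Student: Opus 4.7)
The plan is to apply Lemma~\ref{lem:coprime} once per pair $(i,j)$ and then use a union bound over the $\gamma m$ pairs. Since $B = X^{p_i} - 1$ satisfies $B(0) = -1 \neq 0$, Lemma~\ref{lem:coprime} applies to $A = G$ and $B = X^{p_i}-1$: for a uniformly random $\alpha_j \in \cfqs$, the probability that $G(\alpha_j X)$ and $X^{p_i}-1$ share a root is at most $\deg(G) \cdot \deg(X^{p_i}-1)/q^s = Dp_i/q^s$. Using $p_i < 2\lambda$, this is bounded by $2\lambda D/q^s$.

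Next I would take a union bound over the $\gamma m$ pairs $(i,j)$. Since the $\alpha_j$'s are chosen independently, the probability that at least one pair $(G(\alpha_j X), X^{p_i}-1)$ fails to be coprime is at most
\[
\gamma m \cdot \frac{2\lambda D}{q^s} = \frac{2\lambda\gamma m D}{q^s}.
\]
Finally, I would invoke the definition $s = \lceil \log_q(2\lambda\gamma m D/\mu) \rceil$, which guarantees $q^s \ge 2\lambda\gamma m D/\mu$ and hence that the failure probability above is at most $\mu$. Taking complements yields the claimed probability $1-\mu$ that all $\gamma m$ coprimality conditions hold simultaneously.

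There is essentially no obstacle here: the lemma is a direct application of Lemma~\ref{lem:coprime} together with a union bound, and the choice of $s$ is calibrated precisely so that the resulting failure bound equals $\mu$. The only point that deserves a brief mention in the write-up is that the degree bound $\deg(X^{p_i}-1) = p_i$ must be uniformly replaced by the worst case $2\lambda$ before the union bound, so that the final constant matches the definition of $s$.
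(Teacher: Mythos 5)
Your proof is correct and follows essentially the same route as the paper: the paper merely bundles the polynomials $X^{p_i}-1$ into a single product $\Pi$ of degree at most $2\lambda\gamma$ and applies Lemma~\ref{lem:coprime} once per $\alpha_j$ before a union bound over $j$, whereas you union-bound over all $\gamma m$ pairs directly; both give the identical failure bound $2\lambda\gamma m D/q^s\le\mu$. (Only a cosmetic remark: independence of the $\alpha_j$'s is not needed for the union bound.)
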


\begin{proof}
	Let $\Pi=\prod_{i=1}^\gamma X^{p_i}-1$. Its degree is $\sum_{i=1}^\gamma p_i \le 2\lambda\gamma$.
	For any $\alpha_j$, $G(\alpha_jX)$ is coprime with all the polynomials $X^{p_i}-1$ if and only if $G(\alpha_jX)$ is coprime with $\Pi$.
	Since $\alpha_j$ is randomly chosen in $\cfqs$ then by Lemma~\ref{lem:coprime}, the probability that $G(\alpha_jX)$ and $\Pi$ are not coprime is at most $(2\lambda\gamma D)/q^s\leq \mu/m$ by definition of $s$.
	Therefore the probability that there is at least one $\alpha_j$ such that $G(\alpha_jX)$ and $\Pi$ are not coprime is at most $\mu$. 
\end{proof}

Using Lemmas~\ref{lem:AGR} and~\ref{lem:sizeExtAGR}, we can adapt 
the sparse interpolation algorithm from~\cite{ArGiRo14} to the exact quotient of two sparse polynomials.
It requires a larger extension than the original algorithm, but the growth is negligible by Lemma~\ref{lem:AGR}.

\begin{algorithm}
\caption{\textsc{SparseDivSmallCharacteristic}}
\label{algo:sparsedivsmallchar}
\begin{algorithmic}[1]
  \Input $F$, $G\in\cfq[X]$ such that $G$ divides $F$; a bound $T$ on $\#F$, $\#G$ and $\#(F/G)$; $0<\epsilon<1$
  \Output $F/G\in\cfq[X]$ with probability $\ge1-\epsilon$ 

\State Let $\mu = \frac{\epsilon}{2\lceil\log T\rceil}$ and set $\lambda$, $\gamma$ and $m$ as in Lemma~\ref{lem:AGR}
\State Compute the set $\mathcal P$ of the prime numbers in $]\lambda, 2\lambda[$
\State Compute an extension field $\cfqs$ where $s = \lceil\log_q(2\frac{\lambda\gamma}{\mu} mD)\rceil$
\State Set $\hat Q= 0$
\Loop \ $\lceil\log T\rceil$ times
    \State Choose $p_1$, \dots, $p_\gamma$ at random in $\mathcal P$ 
    \State Choose $\alpha_1$, \dots, $\alpha_m$ at random in $\cfqs$
    \For{each pair $(p_i,\alpha_j)$}
        \State Compute $F\dilatj_{p_i}$, $G\dilatj_{p_i}$ \Comment Lemma~\ref{lem:Aalphap}
        \If{$G\dilatj_{p_i}$ is not coprime with $X^{p_i}-1$} \textbf{return} \emph{failure} \EndIf
        \State Compute $Q\dilat_{p_i}$ \Comment Lemma~\ref{lem:Qalphap}
    \EndFor
    \State Add $\algo{CrtLift}((Q\dilatj_{p_i}-\hat Q\dilatj_{p_i})_{i,j})$ to $\hat Q$
\EndLoop
\State Return $\hat Q$
\end{algorithmic}
\end{algorithm}

\begin{thm}\label{thm:sparsedivsmallchar}
Algorithm~\ref{algo:sparsedivsmallchar} is correct. 
It uses $\gOteps(T\log^2D(\log D+\log q))$ bit operations where $D = \deg(F)$.
\end{thm}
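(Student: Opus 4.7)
The plan is to mimic the proof of Theorem~\ref{thm:sparsedivlargechar}, replacing Huang's halving lemma by its Arnold--Giesbrecht--Roche analogue (Lemma~\ref{lem:AGR}) and Corollary~\ref{cor:Huang} by Lemma~\ref{lem:sizeExtAGR}. For correctness I would isolate two failure modes per iteration of the main loop: (a)~some pair $(p_i,\alpha_j)$ fails the coprimality test between $G(\alpha_j X)$ and $X^{p_i}-1$; (b)~given coprimality, the halving guaranteed by Lemma~\ref{lem:AGR} fails to occur. With $\mu=\epsilon/(2\lceil\log T\rceil)$, Lemma~\ref{lem:sizeExtAGR} bounds~(a) by $\mu$ (the $\alpha_j$'s are resampled at every iteration, which is precisely why $s$ is calibrated in terms of $\mu$), and Lemma~\ref{lem:AGR} bounds~(b) by $\mu$. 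A union bound over the $\lceil\log T\rceil$ iterations then yields a total failure probability at most $\epsilon$.

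Next I would verify the loop invariant. When (a)~holds, Lemma~\ref{lem:Qalphap} produces $Q\dilatj_{p_i}$ exactly; subtracting the already-known $\hat Q\dilatj_{p_i}$ yields $(Q-\hat Q)\dilatj_{p_i}$, so \algo{CrtLift} is being applied to a valid set of images of the current residual $Q-\hat Q$. Lemma~\ref{lem:AGR} then halves the sparsity of this residual. Since $\#(Q-\hat Q)\le T$ and $\deg(Q-\hat Q)\le D$ throughout, the parameters $\lambda,\gamma,m$ fixed at line~1 remain valid at every iteration, and after $\lceil\log T\rceil$ successful iterations the residual vanishes and $\hat Q=Q$.

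For the complexity, Lemma~\ref{lem:AGR} gives $\lambda=\gO(T\log D)$, $\gamma=\gOeps(\log D)$ and $m=\gOeps(\log T)$; hence each prime $p_i$ is $\gO(T\log D)$ and the number of pairs per iteration is $\gamma m=\gOteps(\log D\log T)$. From the choice of $s$ in line~3 one gets $s\log q=\gOteps(\log D+\log q)$ in both regimes $q\ge\lambda\gamma mD/\mu$ (where $s=1$) and $q<\lambda\gamma mD/\mu$ (where $s\log q\approx\log(\lambda\gamma mD/\mu)$). For each pair $(p_i,\alpha_j)$, Lemmas~\ref{lem:Aalphap} and~\ref{lem:Qalphap} compute $F\dilatj_{p_i}$, $G\dilatj_{p_i}$ and $Q\dilatj_{p_i}$ in $\gOt(T\log D\cdot s\log q + p_i\,s\log q)=\gOt(T\log D\cdot s\log q)$ bit operations, since $p_i=\gO(T\log D)$. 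Multiplying by $\gamma m$ pairs and $\lceil\log T\rceil$ iterations yields $\gOteps(T\log^2 D(\log D+\log q))$ bit operations, which absorbs the cost of \algo{CrtLift}. The main subtlety will be checking that the parameters of Lemma~\ref{lem:AGR}, calibrated for $Q$, remain valid for the shrinking residual, and tracking $s\log q$ uniformly across the two regimes of~$q$.
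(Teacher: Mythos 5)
Your proposal is correct and follows essentially the same route as the paper: the same two per-iteration failure modes (coprimality via Lemma~\ref{lem:sizeExtAGR}, halving via Lemma~\ref{lem:AGR}) with a union bound giving $1-2\mu\lceil\log T\rceil\ge 1-\epsilon$, and the same cost accounting dominated by the $\gamma m\lceil\log T\rceil$ computations of $Q\dilatj_{p_i}$ via Lemma~\ref{lem:Qalphap} with $\lambda=\gO(T\log D)$ and $s\log q=\gOteps(\log D+\log q)$. Your explicit treatment of the residual invariant and the two regimes for $s$ is just a more detailed write-up of what the paper leaves implicit by reference to the original Arnold--Giesbrecht--Roche procedure.
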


\begin{proof}
  The algorithm is a modification of~\cite[Procedure~MajorityVoteSparseInterpolate]{ArGiRo14}. 
  The algorithm succeeds if at each iteration, every $G_{p_i}\dilatj$ is coprime with $X^{p_i}-1$ and if \algo{CrtLift}
  succeeds in recovering at least half of the terms of $Q-\hat Q$. Both conditions hold with probability $1-\mu$ at each
  step. The global probability of success is thus at least
  $(1-\mu)^{2\lceil\log T\rceil}\ge 1-2\mu\lceil\log T\rceil \ge 1-\epsilon$.

  As in the original algorithm, the cost is dominated by the computation of all the $Q_{p_i}\dilatj$. There are
  $\gamma m\lceil \log T\rceil$ such polynomials to compute. Since $p_i < 2\lambda$ and $\alpha_j\in\cfqs$, each computation costs
  $\gOt((T\log D+\lambda)s\log q)$ bit operations according to
  Lemma~\ref{lem:Qalphap}. 
  As $\lambda=\gO(T\log D)$, $\gamma=\gOteps(\log D)$, $m$ is logarithmic and $s=\gOteps(1+\log_q D)$, the algorithm requires
  $\gOteps(T\log^2D(\log D+\log q))$ bit operations.
\end{proof}

\subsection{Output sensitive algorithm}

Both interpolation algorithms presented in the previous sections require a bound on the sparsity of the quotient.  To overcome
this difficulty, we use the same strategy as for sparse polynomial multiplication~\cite{giorgi2020:sparsemul}. The idea is to
guess the sparsity bound as we go using a fast verification procedure. For verifying exact quotient, we can directly reuse
Algorithm \verif from~\cite{giorgi2020:sparsemul} that verifies if $F=GQ$, with an error probability at most $\epsilon$ if
$F\neq GQ$.  It requires 
$\gOteps(T(\log D+\log q))$ bit operations over $\cfq$ and $\gOteps(T(\log D+\log C))$ bit operations over $\gz$ where $C$ is a
bound on the heights of $F$, $G$ and $Q$.

\begin{algorithm}
\caption{\textsc{SparseExactDivision}}
\label{algo:exactdiv}
\begin{algorithmic}[1]
	\Input $F$, $G\in\cfq[X]$ such that $G$ divides $F$; $0<\epsilon<1$
\Output $F/G\in\cfq[X]$ with probability at least $1-\frac{\epsilon}{2}$

	\State $t\gets 1$
	\Repeat
    		\State $t\gets2t$
		\State \label{exactdiv:tentative}
                    Compute a tentative quotient $\hat Q$ using Algorithm~\ref{algo:sparsedivlargechar} or~\ref{algo:sparsedivsmallchar},\Statex \hspace*{\algorithmicindent} with sparsity bound $t$ and probability $\frac{\epsilon}{2}$ 
	\Until{$\verif(F,G,\hat Q,\frac{\epsilon}{2t})$}
	\State \Return $\hat Q$
\end{algorithmic}
\end{algorithm}

\begin{thm}\label{thm:exactdiv} 
  Let $F$, $G\in\cfq[X]$ such that $G$ divides $F$, $0<\epsilon<1$, $D=\deg(F)$ and $T= \max(\#F,\#G, \#(F/G))$.  With probability
  at least $1-\epsilon$, Algorithm~\ref{algo:exactdiv} returns $F/G$ in $\gOteps(T\log D\log q)$ bit operations if 
  $\operatorname{char}(\cfq)>D$  or~$\gOteps(T\log^2D(\log D+\log q))$ otherwise.
\end{thm}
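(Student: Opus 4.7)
The plan is to decouple correctness and cost, and in both cases reduce everything to an analysis of the geometric progression of values $t = 2, 4, 8, \dotsc$. Let $t^\star$ be the smallest power of $2$ with $t^\star \ge T$, so $t^\star \le 2T$ and the algorithm performs at most $\lceil\log T\rceil+1$ iterations before $t$ reaches $t^\star$. Once $t \ge T$, the sparsity bound is sufficient for Algorithm~\ref{algo:sparsedivlargechar} or~\ref{algo:sparsedivsmallchar} to return the true quotient $Q=F/G$ with probability at least $1-\epsilon/2$ by Theorem~\ref{thm:sparsedivlargechar} or~\ref{thm:sparsedivsmallchar}; when this happens, $\verif$ accepts unconditionally and the algorithm returns the correct answer.

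For the error analysis, I would take a union bound over the two failure modes. The first mode is the interpolation call at iteration $t^\star$ returning an incorrect $\hat Q$; this contributes at most $\epsilon/2$ by the choice of its failure parameter. The second mode is a \emph{false acceptance} of $\verif$ at some iteration with $t<T$, in which case the returned $\hat Q$ is not the true quotient yet $F = G\hat Q$ is certified. By the stated guarantee on $\verif$, each such iteration fails with probability at most $\epsilon/(2t)$, and summing over $t = 2, 4, 8, \dotsc$ gives
\[
\sum_{i\ge 1} \frac{\epsilon}{2\cdot 2^i} \;\le\; \frac{\epsilon}{2}.
\]
The total error is therefore at most $\epsilon/2 + \epsilon/2 = \epsilon$.

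For the complexity, at an iteration with current value $t$, step~\ref{exactdiv:tentative} costs $\gOteps(t\log D\log q)$ or $\gOteps(t\log^2 D(\log D + \log q))$ bit operations by Theorems~\ref{thm:sparsedivlargechar} and~\ref{thm:sparsedivsmallchar}, while the verification call costs $\gOteps(t(\log D + \log q))$ bit operations; note that the parameter $\epsilon/(2t)$ only contributes a $\log(t/\epsilon) = O(\log T + \log(1/\epsilon))$ factor, which is absorbed by the $\gOteps$ notation. Because both costs are quasi-linear in $t$, summing over the geometric progression $t = 2, 4, \dotsc, t^\star$ gives a total dominated by the last iteration, where $t^\star \le 2T$, and we recover the claimed bounds $\gOteps(T\log D\log q)$ and $\gOteps(T\log^2 D(\log D + \log q))$ respectively.

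The main subtlety I expect is double-checking that the $\log$ factors hidden inside $\gOteps$ behave correctly: the $\epsilon$-parameter of $\verif$ depends on the running value of $t$, so one must verify that this dependence only pollutes the complexity by a polylogarithmic factor in $T$ (and $1/\epsilon$), which is exactly what $\gOteps$ absorbs. Everything else is a routine union bound and a geometric-sum estimate.
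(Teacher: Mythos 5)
Your argument is correct and essentially identical to the paper's proof: the same geometric sum $\sum_t \frac{\epsilon}{2t}\le\frac{\epsilon}{2}$ over the powers of two bounds the false acceptances of \verif, the failure of the interpolation call at the first $t\ge T$ (so $t\le 2T$ and $\gO(\log T)$ iterations) accounts for the other $\frac{\epsilon}{2}$, and the quasi-linear per-iteration costs are summed over the doubling values of $t$. The only cosmetic deviation is bookkeeping: the paper charges the failure of the interpolation call at $t^\star$ to the complexity guarantee rather than to correctness (it only causes extra iterations, not a wrong output, since a correct answer requires a false acceptance of \verif), but the combined probability-$1-\epsilon$ statement you obtain is the same.
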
 
 
\begin{proof}
  The probability $1-\epsilon$ concerns both the correctness and the complexity of the algorithm. More precisely, we prove that
  the algorithm is correct with probability $\ge 1-\frac{\epsilon}{2}$ and that it performs the claimed number of bit operations
  with probability $\ge 1-\frac{\epsilon}{2}$.

  The algorithm is incorrect when $F\neq G\hat Q$. This happens if at some iteration, Algorithm~\ref{algo:sparsedivlargechar}
  or~\ref{algo:sparsedivsmallchar} returns an incorrect quotient but the verification algorithm fails to detect it.  
  In other words, for the algorithm to return a correct answer, all the verifications must succeed.
  This happens with probability at least
  $1-\sum_t \frac{\epsilon}{2t}\ge 1-\frac{\epsilon}{2}$ since the sum ranges over powers of two.\footnote{The error
    probability analysis of~\cite[Algorithm~2]{giorgi2020:sparsemul} is flawed and should be replaced by this new one.}

  For the complexity we first need to bound the number of iterations. Since the values of $t$ are powers of two, the first value
  $\ge \#(F/G)$ is at most $2\#(F/G)$. If $t$ attains this value, Algorithm~\ref{thm:sparsedivlargechar}
  or~\ref{thm:sparsedivsmallchar} correctly computes $F/G$ with probability at least $1-\frac{\epsilon}{2}$ according to
  Theorems~\ref{algo:sparsedivlargechar} and~\ref{algo:sparsedivsmallchar}. That is, with probability at least
  $1-\frac{\epsilon}{2}$, $t$ is bounded by $2T$ and the number of iterations is $\gO(\log T)$.
  Depending on the characteristic, using Theorems~\ref{thm:sparsedivlargechar} or~\ref{thm:sparsedivsmallchar} and the complexity
  of~\cite[Algorithm~$\verif$]{giorgi2020:sparsemul}, we obtain the claimed complexity with probability at
  least $1-\frac{\epsilon}{2}$.
\end{proof}

\subsection{Algorithm over the integers}
For polynomials over $\gz[X]$, we cannot directly use Algorithm~\ref{algo:exactdiv} with the variant of Huang's algorithm over
$\gz$~\cite{Huang2019}. Indeed, the coefficients arising during the computation may be dramatically larger than the inputs and the
output. This is mostly due to the inversion of $G$ \emph{modulo} $X^p-1$.  Instead we use the standard technique that maps the
computation over some large enough prime finite field.  As we cannot tightly predict the size of the coefficients, we can again
reuse our guess-and-check approach with several prime numbers of growing size to discover it as we go. In order to use the fastest
algorithm (Algorithm~\ref{algo:sparsedivlargechar}) we consider prime finite fields $\cfq$ such that $q$ is larger than the input
degree.

We first define a bound on the coefficient of the quotient of two sparse polynomial over $\gz[X]$ as the classic Mignotte's bound
\cite{MCAlgebra} on dense polynomial is too loose and it has no equivalent in the sparse case.

\begin{lem}\label{lem:sdivheighbound}
  Let $F$, $G$, $Q\in\gz[X]$ be three sparse polynomials such that $F=QG$ and $T=\#Q$ is the number of nonzero coefficient of
  $Q$. Then
  \[
    \height{Q}  \leq (\height{G}+1)^{\ceil{\frac{T-1}{2}}}\height{F}. 
  \]
\end{lem}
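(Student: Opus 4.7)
The plan is to bound each coefficient $|q_i|$ by repeatedly reading off the topmost and bottommost nonzero coefficients of $Q$ through one step of long division. Write $Q = \sum_{i=1}^T q_i X^{e_i}$ with $e_1 < \dots < e_T$, and let $a$ and $b$ denote the leading and trailing coefficients of $G$, which are nonzero integers and therefore satisfy $|a|, |b| \ge 1$. My strategy is to run two independent peeling processes, one from the top and one from the bottom of $Q$, and then combine the bounds they produce.

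For the top peeling, set $F_0 = F$, $Q_0 = Q$, and recursively define $F_{k+1} = F_k - q_{T-k} X^{e_{T-k}} G$ and $Q_{k+1} = Q_k - q_{T-k} X^{e_{T-k}}$. The invariant $F_k = Q_k G$ is preserved throughout, so the leading monomial of $F_k$ is exactly $q_{T-k}\, a\, X^{e_{T-k} + \deg G}$. This yields $|q_{T-k}| \le \height{F_k}/|a| \le \height{F_k}$, and the triangle inequality on the recurrence gives $\height{F_{k+1}} \le (1 + \height{G})\, \height{F_k}$. By induction,
\[
  |q_{T-k}| \le (1 + \height{G})^k\, \height{F}, \qquad k = 0, 1, \dots, T-1.
\]

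A symmetric argument for the bottom of $Q$, most cleanly obtained by applying the same procedure to the reciprocals $F^\star = Q^\star G^\star$ (which satisfy this equality because the degrees multiply), yields
\[
  |q_{k+1}| \le (1 + \height{G})^k\, \height{F}, \qquad k = 0, 1, \dots, T-1.
\]
Keeping the sharper of the two bounds for each index $i \in \{1, \dots, T\}$ gives $|q_i| \le (1+\height{G})^{\min(T-i,\, i-1)}\, \height{F}$, and since $\max_i \min(T-i,\, i-1) = \lfloor (T-1)/2 \rfloor \le \lceil (T-1)/2 \rceil$, the claimed bound follows.

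There is no substantial obstacle: the leading and trailing monomials of each $F_k$ are uniquely identified because $a$, $b$ and the peeled $q_j$'s are all nonzero, and integrality is preserved because we only subtract integer multiples of $G$ from the integer polynomial $F$. The improvement over a naive one-sided peeling, which would give exponent $T-1$, comes precisely from running the two peeling processes independently starting from $F$, so that the factor $(1 + \height{G})$ is paid only once per peeled monomial rather than being compounded across both sides.
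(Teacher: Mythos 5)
Your proof is correct and follows essentially the same route as the paper's: bound each coefficient by peeling monomials of $Q$ through Euclidean-division steps, with the height growing by a factor $(1+\height{G})$ per step, and use the reciprocals $F^\star = Q^\star G^\star$ to run the process from the other end so each $q_i$ is reached within $\lfloor (T-1)/2\rfloor$ steps. The only difference is presentational: you make the two-sided combination $\min(T-i,\,i-1)$ explicit, which the paper states more tersely.
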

\begin{proof}
  Write $Q=\sum_{i=1}^T q_i X^{e_i}$ with $e_1>e_2>\dots>e_T$. We use
  induction on the remainder and quotient sequence in the Euclidean division algorithm.  Let $Q_j= \sum_{i=1}^jq_iX^{e_i}$ and $R_j=F-Q_jG$
  be the elements of that sequence, starting with $R_0=F$ and $Q_0=0$. The integer coefficients of $Q$ are defined as
  $q_{i}= \mathsf{LC}(R_{i-1})/ \mathsf{LC}(G)$ where $\mathsf{LC}$ denote the leading coefficient.  We know from the algorithm
  that $R_{i}= R_{i-1} -q_{i}X^{e_{i}}G$ and $Q_{i}=Q_{i-1}+q_{i}X^{e_{i}}$. Since $R_0=F$ and
  \[
    \height{R_{i}} \leq \height{R_{i-1}} + |q_{i}| \times \height{G} \leq \height{R_{i-1}}
    (1+\height{G})
  \]
  we have $\height{R_{i}} \leq \height{F}(1+\height{G})^{i}$. Since the reciprocal $Q^\star$ of
  $Q$ is defined by the quotient $F^\star/G^\star$, we also get $\height{R_i}\le\height F(1+\height G)^{T-i}$. 
  Therefore, 
  \[\height{Q}= \max_i(|q_i|) \leq \max_i\height{R_{i}}\leq \height{F}(1+\height{G})^{\lceil\frac{T-1}{2}\rceil}.\qedhere\] 
\end{proof}

\begin{algorithm}
\caption{\textsc{SparseExactDivisionOverZ}}
\label{algo:divoverZ}
\begin{algorithmic}[1]
	\Input $F$, $G\in\gz[X]$ such that $G$ divides $F$; $0<\epsilon<1$
\Output $F/G$ with probability at least $1-\frac{\epsilon}{2}$

	\State Let $n = \deg(F)$, $i=2$
	\Repeat
		\State $i\gets 2i$
		\State Choose $q$ at random in $]n,2n[$, prime with prob. $\ge1-\frac{\epsilon}{2i}$
		\State Compute the reductions $F_q = F\bmod q$ and $G_q = G\bmod q$
		\State Compute $\hat Q = \textsc{SparseExactDivision}(F_q,G_q,\frac{\epsilon}{2i})$
    		\State $n\gets n^2$
	\Until{$\verif(F,G,\hat Q,\frac{\epsilon}{i})$}
	\State \Return $\hat Q$
\end{algorithmic}
\end{algorithm}

\begin{thm}\label{thm:divoverZ}
  Let $F$, $G$ be two sparse polynomials in $\gz[X]$ such that $G$ divides $F$, $0<\epsilon<1$, $D=\deg(F)$,
  $C=\max(\height F+\height G)$ and $T=\max(\#F,\#G,\#(F/G))$.  With probability at least $1-\epsilon$,
  Algorithm~\ref{algo:divoverZ} returns $F/G$ in $\gOteps(T(\log C+\log^2 D)+\log^3D)$ bit operations if $D>2\height{F/G}$ or
  $\gOteps(T(\log C+\log D\log \height{F/G})+\log^3\height{F/G})$ bit operations otherwise.
\end{thm}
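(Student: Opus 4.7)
My plan is to mirror the guess-and-check analysis used in the proof of Theorem~\ref{thm:exactdiv}, adapted to the integer setting. The returned $\hat Q$ differs from $F/G$ only if \verif falsely accepts an incorrect quotient at some iteration. Since \verif has false-accept probability at most $\epsilon/i$ and the values of $i$ during the loop form the geometric sequence $4, 8, 16, \dotsc$ of powers of two, a union bound gives total correctness-failure probability at most $\sum_{k\ge 2}\epsilon/2^k = \epsilon/2$, which yields the claimed success probability.

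The central observation for the complexity is that once the prime $q$ drawn from $\,]n,2n[$ satisfies $q > 2\height{F/G}$, the centered lift of $F_q/G_q \in \cfq[X]$ to $\gz[X]$ is exactly $F/G$; this condition is guaranteed as soon as $n \ge 2\height{F/G}$. Under that condition the sub-call to \textsc{SparseExactDivision} returns $F/G \bmod q$ with probability $\ge 1-\epsilon/(2i)$ (using $q > n \ge D$ to invoke Algorithm~\ref{algo:sparsedivlargechar} through Theorem~\ref{thm:exactdiv}), after which \verif accepts. Starting from $n = D$ and squaring $n$ each pass, we have $n_k = D^{2^k}$ after $k$ iterations. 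In the case $D > 2\height{F/G}$, the threshold is already met at $k = 0$ and a single iteration suffices; otherwise it is met once $2^k \log D \ge 1 + \log\height{F/G}$, i.e.\ after $K = \gO(\log\log\height{F/G})$ iterations.

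For the per-iteration costs at parameter $n_k$, randomized prime generation via Miller--Rabin costs $\gOteps(\log^3 n_k)$, reducing $F$ and $G$ modulo $q$ costs $\gOt(T(\log C + \log n_k))$, the call to Algorithm~\ref{algo:exactdiv} costs $\gOteps(T \log D \log n_k)$ by Theorem~\ref{thm:exactdiv} (applied in the large-characteristic regime since $q > D$), and \verif costs $\gOteps(T(\log D + \log C + \log n_k))$. Because $n_k$ grows doubly exponentially in $k$, the sum over all iterations is dominated by the final one. In the case $D > 2\height{F/G}$ this yields $\gOteps(T(\log C + \log^2 D) + \log^3 D)$; otherwise $\log n_K = \Theta(\log\height{F/G})$ and we obtain $\gOteps(T(\log C + \log D\log\height{F/G}) + \log^3\height{F/G})$. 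The main delicacy I foresee is simultaneously tracking the doubling schedule on $i$ (which governs the error budget) and the squaring schedule on $n$ (which governs the cost growth), and making sure the geometric telescoping is clean enough that no additional logarithmic overhead slips into the final bound.
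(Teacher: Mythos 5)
Your proposal is correct and follows essentially the same route as the paper's proof: a union bound over the \verif calls for correctness, the observation that $F_q/G_q$ lifts to $F/G$ as soon as $q>2\height{F/G}$, and a cost analysis dominated by the last iteration of the squaring schedule on $n$, split into the cases $D>2\height{F/G}$ and $D\le 2\height{F/G}$. The one delicacy you flag but leave open---that the shrinking error parameter $\epsilon/(2i)$ adds no unwanted overhead---is settled in the paper via Lemma~\ref{lem:sdivheighbound}: since $\height{F/G}\le(\height{G}+1)^{\ceil{(T-1)/2}}\height{F}$, the number of iterations $j$ is $\gO(\log T+\log\log C)$, hence $\log\frac{2^j}{\epsilon}=\gO(\log T+\log\log C+\log\frac{1}{\epsilon})$ and $\gOt_{\epsilon/2^j}(\cdot)$ collapses to $\gOteps(\cdot)$.
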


\begin{proof}
  The proof goes along the same lines as for Theorem~\ref{thm:exactdiv}. 
  With the same arguments, the probability that the algorithm returns an incorrect quotient is at most
  $\frac{\epsilon}{2}$.

  In Step~4, we can use a Miller-Rabin based algorithm to compute a number $q$ that is prime with probability at least
	$1-\frac{\epsilon}{2i}$ in $\gOt_{{\epsilon}/{i}}(\log^3q)$ bit operations.  Step~5 performs $\gO(T)$ divisions in $\gOt(T\log C)$ bit operations. Thus
  by Theorem~\ref{thm:exactdiv}, 
  an iteration of the loop 
	correctly computes $F_q/G_q$ in $\cfq[X]$
	in $\gOt_{{\epsilon}/{i}}(T\log D\log q+T\log C +\log^3q)$ bit operations
	with probability at least $1-\frac{\epsilon}{i}$. 
  Let $Q = F/G$. As soon as $n>2\height{Q}$, $F_q/G_q$ is actually $Q$.
	Therefore, the algorithm stops with $n<4\height{Q}^2$ with probability
	$\ge 1-\frac{\epsilon}{i}$.  If $D>2\height{Q}$, $q$ satisfies $2\height Q<q<2D$ at the first
  iteration. 
  Hence, the algorithm correctly computes $Q$ in one iteration with probability $\ge1-\frac{\epsilon}{2}$. Its bit
  complexity is then $\gOt_{\epsilon}(T\log^2 D+T\log
  C+\log^3D)$. 

  Otherwise, at most $j=\ceil{\log\log\height Q}$ iterations are needed to get $2\height Q<n<q<4\height Q^2$. 
	Thus $i$ is bounded by $2^{j+1}$
	and $Q$ is correctly
	computed in $\gOt_{\epsilon/{2^j}}(T\log D\log\height Q+T\log C+\log^3\height Q)$ bit operations with probability at least
	$1-\sum_{k\geq2}\frac{\epsilon}{2^k}\geq1-\frac{\epsilon}{2}$. By Lemma~\ref{lem:sdivheighbound},
	$\height Q\leq (\height G+1)^{\ceil{\frac{T-1}{2}}}\height F$, whence $\log\frac{2^j}{\epsilon}=\gO(\log T+\log\log C+\log\frac{1}{\epsilon})$
	and $\gOt_{\epsilon/2^j}(\cdot)$ is $\gOteps(\cdot)$.
	
  In both cases, 
  the cost of the loop body 
  is as stated with probability at
  least $1-\frac{\epsilon}{2}$.  Since the verification with probability of success at least $1-\frac{\epsilon}{2i}$ requires
  $\gOt_{\epsilon/2i}(T(\log D +\log C+ \log \height Q))$ bit operations and the maximal value of $i$ is expected to be $\gO(\log\height Q)$,
  its cost is negligible compared to the loop body. Thus the algorithm works as stated with probability at least $1-\epsilon$.
\end{proof}

\section{Divisibility testing}\label{sec:divtesting}

Given two sparse polynomials $F$, $G\in\K[X]$, we want to check whether $G$ divides $F$ in polynomial time. If $\deg(F) = m+n-1$,
$\deg(G) = m$ and $\#F$, $\#G\le T$, then the input size is $\gO(T\log(m+n))$ and the divisibility check must cost
$(T\log(m+n))^{\gO(1)}$. We first remind the only known positive results.

\begin{prop} \label{prop:basecases} Let $F$, $G\in\K[X]$ of degrees $m+n-1$ and $m$ respectively, and sparsity at most $T$. If either $m$ or $n$ is polynomial in $T\log(m+n)$, one can check whether $G$ divides $F$ in polynomial time. 
\end{prop}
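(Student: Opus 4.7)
The plan is to split into the two stated cases, since they correspond to dual structural simplifications: bounding $m$ makes the divisor dense-small, while bounding $n$ forces the would-be quotient $Q=F/G$ to be dense-small. In each case I will reduce the divisibility test to either a single modular reduction or a power-series inversion followed by a verification.

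For the first case, $m=(T\log(m+n))^{\gO(1)}$, I would compute the remainder $R=F\bmod G$ and test whether $R=0$. Even though $F$ may carry exponents up to $m+n-1$, every monomial $X^{e_i}$ of $F$ can be reduced modulo $G$ by repeated squaring in $\gO(\log(m+n))$ multiplications of polynomials of degree less than $m$. Each such multiplication costs $\gOt(m)$ ring operations, so processing all $T$ terms of $F$ and accumulating the images into a running dense remainder yields total cost $\gOt(Tm\log(m+n))$, which is polynomial by hypothesis.

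For the second case, $n=(T\log(m+n))^{\gO(1)}$, the would-be quotient $Q$ has degree at most $n-1$ and therefore admits a polynomial-size dense representation. I first normalize: let $k$ be the largest integer with $X^k\mid G$; if some exponent of $F$ is strictly below $k$, we already conclude $G\nmid F$, otherwise we divide $F$ and $G$ by $X^k$ to reduce to the case $G(0)\neq 0$. Then $G$ is invertible as a power series, and if $G\mid F$ we must have
\[
  Q \equiv (F\bmod X^n)\cdot (G\bmod X^n)^{-1}\pmod{X^n},
\]
so I compute $(G\bmod X^n)^{-1}$ by Newton iteration in $\gOt(n)$ operations, multiply by $F\bmod X^n$ modulo $X^n$, and read off a candidate $\hat Q$ of degree less than $n$. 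The final step is to verify $\hat Q G=F$ by performing the sparse multiplication of the dense polynomial $\hat Q$ (of polynomial size) by the $T$-sparse polynomial $G$: the product has at most $Tn=(T\log(m+n))^{\gO(1)}$ nonzero terms and can then be compared coefficient-by-coefficient with $F$ in polynomial time.

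Neither branch presents a real obstacle. The only subtlety lies in the second case, where one must strip the common power of $X$ from $G$ before power-series inversion to ensure $G(0)\neq 0$, and must exploit the sparsity of $G$ (rather than expanding $\hat Q G$ in dense form, which could blow up to degree $m+n-1$) when carrying out the final equality check.
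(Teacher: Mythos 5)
Your proof is correct, and its case split matches the paper's; the branch where $m$ is small is exactly the paper's argument (reduce each $X^{e}$ for $e\in\supp(F)$ modulo $G$ by fast exponentiation and test whether the accumulated remainder vanishes). The only divergence is the branch where $n$ is small: the paper simply runs the classical sparse Euclidean division, observing that its cost, e.g.\ $\gO(\#F+\#Q\#G\log\min(\#Q,\#G))$, is polynomial because $\#Q\le n$, and checks that the remainder is zero; you instead recover a candidate quotient from the low-order coefficients via power-series inversion and then verify $\hat Q G=F$ by a sparse-by-dense product. Your route is sound, but note that it forces two extra steps the paper avoids: the normalization $G(0)\neq 0$ (stripping the common power of $X$), and the final verification, which is genuinely needed since the truncation $(F\bmod X^n)(G\bmod X^n)^{-1}\bmod X^n$ coincides with the Euclidean quotient only when the division is exact. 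You could have sidestepped the normalization by working with reciprocals, i.e.\ computing $F^\star/G^\star\bmod X^n$ as the paper does later in Corollary~\ref{cor:sparsitybound}, since the constant term of $G^\star$ is the leading coefficient of $G$ and is automatically nonzero; the verification step would still be required for the same reason. Both approaches yield the claimed polynomial bound, the paper's being slightly more direct and yours making the quotient-recovery mechanism explicit.
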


\begin{proof} 
  Let $F = GQ+R$ be the Euclidean division of $F$ by $G$.  When $n$ is polynomially bounded, the Euclidean division algorithm runs
  in polynomial time and the verification is trivial.  When $m$ is polynomially bounded, the degree of the remainder is polynomial
  and it can be computed in polynomial time without computing $Q$. Indeed, it suffices to compute $X^e\bmod G$ for each exponent
  $e\in\supp(F)$ in polynomial time by fast exponentiation.
\end{proof}

The rest of the section can be seen as a generalization of the proposition.  As long as one has a polynomial bound on the size of
the quotient, the divisibility test is polynomial by either computing $F-GQ$ or asserting that $F = GQ$.  We begin with a very
simple remark, that we shall use repeatedly.

\begin{rem}\label{rem:exactdivreversal}
  Let $F$, $G\in\K[X]$, and $F^\star = X^{\deg F}F(1/X)$ and $G^\star = X^{\deg G} G(1/X)$ be their reciprocal polynomials. Then
  $G$ divides $F$ if and only if $G^\star$ divides $F^\star$. In this case, the quotient $Q^\star =F^\star\bquo G^\star$ is the
  reciprocal of the quotient $Q = F\bquo G$.
\end{rem}

\begin{proof}
  From the definition, $(AB)^\star = A^\star B^\star$ for $A$, $B\in\K[X]$. Therefore, if there exists $Q$ such
  that $F = GQ$, then $F^\star = G^\star Q^\star$. The converse follows since the reciprocal is involutive.
\end{proof}

We note that the equality $(F\bquo G)^\star = F^\star\bquo G^\star$ is not true in general when $G$ does not divide $F$. Let $F = GQ+R$ with $\deg(R)<\deg(G)$. Then $F^\star = X^{\deg(F)}G(1/X)Q(1/X) + X^{\deg(F)}R(1/X) = G^\star Q^\star + X^{\deg(F)}R(1/X)$. 
Therefore $Q^\star = F^\star\bquo G^\star$ if and only if $R = X^{\deg(F)}R(1/X)$.

\subsection{Bounding the sparsity of the quotient}

In this section, we provide a bound on the sparsity of the quotient $Q = F\bquo G$, depending on the \emph{gap} between the
highest and second highest exponents in $G$. We make use of the following estimation.

\begin{lem}\label{lem:sparsitybound}
  Let $G\in\K[X]$ of degree $m$ and sparsity $\#G$, such that $G=1 - X^k G_1$ with $G_1\in\K[X]$ of degree $m-k$. Then for all
  $t\ge 0$, $1/G\bmod X^{kt}$ has at most $\frac{1}{(t-1)!}(\#G+t-2)^{t-1}$ nonzero monomials.
\end{lem}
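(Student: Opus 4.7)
The natural approach is to expand $1/G$ as a formal geometric series in $X^k G_1$ and then bound the sparsity of each truncated term. Since $G = 1 - X^k G_1$, the series
\[
\frac{1}{G} = \sum_{i\ge 0}(X^k G_1)^i = \sum_{i\ge 0} X^{ki} G_1^i
\]
is well-defined as a formal power series. The key observation is that $(X^k G_1)^i$ has minimum degree at least $ki$, so every term with $i \ge t$ vanishes modulo $X^{kt}$. Hence
\[
\frac{1}{G} \equiv \sum_{i=0}^{t-1} X^{ki} G_1^i \pmod{X^{kt}},
\]
and by subadditivity of sparsity (after reduction only reduces the number of terms), $\#(1/G \bmod X^{kt}) \le \sum_{i=0}^{t-1} \#(G_1^i)$.

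Next I would bound $\#(G_1^i)$ using the multinomial expansion: if $G_1$ has $\#G_1$ nonzero terms, then $G_1^i$ is a sum of at most $\binom{\#G_1 + i - 1}{i}$ monomials (one per weak composition of $i$ into $\#G_1$ parts). Since $k \ge 1$ the constant term $1$ of $G$ does not interact with the terms of $X^k G_1$, so $\#G = 1 + \#G_1$, i.e.\ $\#G_1 = \#G - 1$. Substituting gives $\#(G_1^i) \le \binom{\#G + i - 2}{i}$.

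Finally I would collect the bounds with the hockey-stick identity $\sum_{i=0}^{n}\binom{r+i}{i} = \binom{r+n+1}{n}$ applied with $r = \#G - 2$ and $n = t-1$:
\[
\sum_{i=0}^{t-1}\binom{\#G + i - 2}{i} \;=\; \binom{\#G + t - 2}{t-1} \;\le\; \frac{(\#G + t - 2)^{t-1}}{(t-1)!},
\]
yielding the claimed bound. The argument is essentially routine; there is no genuine obstacle. The only care required is the truncation step (confirming that terms with $i \ge t$ really do vanish mod $X^{kt}$, which uses that $G_1$ may have any valuation $\ge 0$ and does not require $G_1(0) \neq 0$) and the edge cases $t = 0$ (empty sum, bound vacuous) and $t = 1$ (giving the correct bound of $1$).
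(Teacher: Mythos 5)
Your proof is correct and follows essentially the same route as the paper: expand $1/G$ as the geometric series $\sum_{i\ge 0} X^{ki}G_1^i$ and observe that all terms with $i\ge t$ vanish modulo $X^{kt}$. The only difference is the final count — the paper bounds the support of the truncated sum by the set of sums of $t-1$ exponents of $G$, of size at most $\binom{\#G+t-2}{t-1}$, whereas you bound each $\#(G_1^i)$ by $\binom{\#G+i-2}{i}$ and sum via the hockey-stick identity — and the two counts coincide, so this is merely a bookkeeping variant.
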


\begin{proof}
  Since $G(0) = 1$, it is invertible in the ring of power series. Let $\phi = \sum_{i\ge 0} X^{ki}G_1^i\in\K[[X]]$ be its inverse. 
  As soon as $i\ge t$, $X^{ki}\bmod X^{kt}= 0$. Therefore
  \[ \phi\bmod X^{kt} = \sum_{i=0}^{t-1} X^{ki}G_1^i\bmod X^{kt}.\]
  Note that the support of $G^{t-1}$ is a subset of $S=\{\sum_{j=1}^{t-1} e_j:e_j\in\supp(G)\}$ which has size at most
  $\binom{\#G+t-2}{t-1}\le \frac{1}{(t-1)!}(\#G+t-2)^{t-1}$.  Using the expansion
  $G^{t-1} = \sum_{i=0}^{t-1} \binom{t-1}{i}(-1)^iX^{ki}G_1^i$, one can  see by identification that
  $\supp(\phi\bmod X^{kt}) \subseteq S$, whence the result.  
\end{proof}

\begin{cor}\label{cor:sparsitybound}
  Let $F$ and $G\in\K[X]$ of respective degrees $m+n-1$ and $m$, and respective sparsities $\#F$ and $\#G$. If $G=X^m - G_0$ with
  $G_0\in\K[X]$ of degree $m-k$ then the quotient $Q=F\bquo G$ has at most $\frac{1}{(\lceil n/k\rceil-1)!}\#F(\#G+\lceil n/k\rceil-2)^{\lceil n/k\rceil-1}$
  nonzero monomials.
\end{cor}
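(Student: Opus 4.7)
The plan is to reduce to Lemma~\ref{lem:sparsitybound} by passing to reciprocals. First I would set $G^\star = X^m G(1/X)$; since $G = X^m - G_0$ with $\deg(G_0) = m - k$, a direct computation gives
\[ G^\star(X) = 1 - X^k G_0^\star(X), \]
where $G_0^\star$ is the reciprocal of $G_0$. In particular $G^\star(0) = 1$ and $\#G^\star = \#G$, so $G^\star$ fits exactly the hypothesis of Lemma~\ref{lem:sparsitybound}.

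Next I would use the classical power-series identity linking Euclidean quotients to inverses of reciprocals. Starting from $F = GQ + R$ with $\deg R < m$ and $\deg Q = n - 1$, multiplying both sides by $X^{m+n-1}$ and substituting $1/X$ yields
\[ F^\star = G^\star Q^\star + X^{m+n-1-\deg R} R^\star, \]
and the last term is divisible by $X^n$ since $\deg R \le m-1$. Hence
\[ Q^\star \equiv F^\star \cdot (G^\star)^{-1} \pmod{X^n}. \]
Since $\deg Q^\star = \deg Q \le n - 1$, this congruence in fact determines $Q^\star$, and therefore $\#Q = \#Q^\star$.

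Now I would apply Lemma~\ref{lem:sparsitybound} to $G^\star$ with $t = \lceil n/k \rceil$: since $X^{kt}$ is a multiple of $X^n$, reducing further modulo $X^n$ can only discard monomials, so
\[ \#\bigl((G^\star)^{-1}\bmod X^n\bigr) \le \frac{1}{(\lceil n/k\rceil - 1)!}\bigl(\#G + \lceil n/k\rceil - 2\bigr)^{\lceil n/k\rceil - 1}. \]
Finally, the support of a product of two polynomials is contained in the sumset of their supports, so $\#(A B \bmod X^n) \le \#A \cdot \#B$, and combining with $\#Q = \#Q^\star$ yields the claimed bound.

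The only delicate point is the derivation of $Q^\star \equiv F^\star(G^\star)^{-1} \pmod{X^n}$; once that reversal identity is in hand, the rest is a direct application of Lemma~\ref{lem:sparsitybound} plus the trivial sparsity bound on products. No step requires anything beyond standard manipulations of reciprocal polynomials and a truncation of the series inverse.
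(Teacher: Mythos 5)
Your proposal is correct and follows essentially the same route as the paper: pass to reciprocals, note $G^\star = 1 - X^k G_0^\star$, use the classical identity $Q^\star = F^\star/G^\star \bmod X^n$ (which the paper simply cites), and apply Lemma~\ref{lem:sparsitybound} with $t = \lceil n/k\rceil$ together with the sumset bound on the product's support. The only difference is that you re-derive the reversal identity explicitly, which the paper takes as classical.
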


\begin{proof}
  Let $F = GQ+R$ with $\deg(R)<m$. It is classical that the reciprocal $Q^\star$ of $Q$ equals
  $F^\star/G^\star\bmod X^n$~\cite{MCAlgebra}. We can apply Lemma~\ref{lem:sparsitybound} to $G^\star$ since
  $G^\star = 1-X^{k}G_0^\star$. Hence, $1/G^\star\bmod X^n$ has at most
  $\frac{1}{(\lceil n/k\rceil -1)!}(\#G-\lceil n/k\rceil -2)^{\lceil n/k\rceil-1}$ nonzero monomials, using $t = \lceil n/k\rceil$ and noting that
  $n\le kt$. This implies that the sparsity of $Q^\star$, that is the sparsity of $Q$, is at most
  $\frac{1}{(\lceil n/k\rceil-1)!}\#F(\#G+\lceil n/k\rceil-2)^{\lceil n/k\rceil-1}$.
\end{proof}

\begin{cor}\label{cor:sparsityboundbis}
  Let $F$, $G\in\K[X]$ of respective degrees $m+n-1$ and $m$. If $G = 1 - X^k G_1$ and $G$ divides $F$, then the quotient $Q=F\bquo G$ has at
  most 
$\frac{1}{(\lceil n/k\rceil-1)!}\#F(\#G+\lceil n/k\rceil-2)^{\lceil n/k\rceil-1}$
  nonzero monomials.
\end{cor}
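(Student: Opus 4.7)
The plan is to exploit the fact that $G$ has constant term $1$, so $G$ is a unit in the power series ring $\K[[X]]$, and to apply Lemma~\ref{lem:sparsitybound} directly (without going through the reciprocal as in Corollary~\ref{cor:sparsitybound}). Concretely, since $G$ divides $F$ in $\K[X]$ with $\deg Q = n-1$, the identity $F = QG$ gives $Q = F\cdot G^{-1}$ in $\K[[X]]$, and since $\deg Q < n$ this means $Q = F\cdot G^{-1} \bmod X^n$.

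From here I would bound the sparsity of the product modulo $X^n$ by the product of sparsities: $\#Q \le \#F \cdot \#(G^{-1}\bmod X^n)$. Setting $t = \lceil n/k\rceil$, we have $n \le kt$, so the support of $G^{-1}\bmod X^n$ is included in that of $G^{-1}\bmod X^{kt}$. Lemma~\ref{lem:sparsitybound} then yields $\#(G^{-1}\bmod X^{kt}) \le \frac{1}{(t-1)!}(\#G+t-2)^{t-1}$, and combining these two inequalities gives exactly the claimed bound $\#Q \le \frac{1}{(\lceil n/k\rceil-1)!}\#F(\#G+\lceil n/k\rceil-2)^{\lceil n/k\rceil-1}$.

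There is no real obstacle; the only subtle point is justifying $Q = FG^{-1}\bmod X^n$ cleanly. Alternatively, one can run the proof through Remark~\ref{rem:exactdivreversal}: writing $G^\star = X^m - G_1^\star$ (after first factoring out any power of $X$ from $G_1$ so that $G_1(0)\neq 0$, which only enlarges $k$ and can only improve the bound) reduces the statement to Corollary~\ref{cor:sparsitybound} applied to $F^\star$ and $G^\star$, using $\#F^\star=\#F$, $\#G^\star=\#G$, and $\#Q^\star = \#Q$. Either route works; the direct power-series argument is shorter and avoids the case analysis on $G_1(0)$.
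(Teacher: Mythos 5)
Your proof is correct, and your primary argument takes a more direct route than the paper. The paper disposes of this corollary in one line by reducing to Corollary~\ref{cor:sparsitybound}: it passes to the reciprocals $F^\star$, $G^\star$ (so that $G^\star = X^m - G_0$ with $G_0$ of degree at most $m-k$) and uses Remark~\ref{rem:exactdivreversal}, i.e.\ the divisibility hypothesis, to identify $F\bquo G$ with $(F^\star\bquo G^\star)^\star$; since the proof of Corollary~\ref{cor:sparsitybound} itself reciprocates again, the paper ultimately applies Lemma~\ref{lem:sparsitybound} to essentially $G$ itself. Your power-series argument short-circuits this round trip: exactness gives $Q = FG^{-1}$ in $\K[[X]]$, the bound $\deg Q = n-1 < n$ gives $Q = FG^{-1}\bmod X^n$, hence $\#Q \le \#F\cdot\#(G^{-1}\bmod X^n)$, and Lemma~\ref{lem:sparsitybound} with $t=\lceil n/k\rceil$ together with monotonicity of truncation ($n\le kt$) yields the stated bound. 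All steps are sound: the sparsity of a product is at most the product of the sparsities and truncation only removes terms, which is exactly the implicit step in the paper's proof of Corollary~\ref{cor:sparsitybound}. What the paper's route buys is reuse of the already-proved Corollary~\ref{cor:sparsitybound}; what your route buys is that it makes transparent where divisibility is used (without it, $Q$ is no longer the truncation of $FG^{-1}$, consistent with the example following the corollary) and it sidesteps the minor imprecision that $G_1^\star$ has degree exactly $m-k$ only when $G_1(0)\neq 0$ --- the point your alternative paragraph, which is essentially the paper's proof, correctly patches by factoring a power of $X$ out of $G_1$.
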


\begin{proof}
  We apply Corollary~\ref{cor:sparsitybound} to $F^\star$ and $G^\star$. Indeed, $G^\star =X^m-G_0$ for some $G_0$ of degree $m-k$
  and since $G$ divides $F$, we have $F\bquo G = (F^\star\bquo G^\star)^\star$.
\end{proof}

Next example shows that the bound does not hold anymore if $G$ does not divide $F$.

\begin{example}
Let $F = X^{m+n-1}-1$ and $G = X^m-X^{m-1}+1$. Then $F\bquo G = \sum_{i=0}^{n-1} X^i$ is as dense as possible.
\end{example}

If $F = GQ+R$ with some nonzero $R$ then obviously $F-R = GQ$, that is $G$ divides $F-R$. This implies that if $R$ has few nonzero
monomials, then $Q$ as well since $F-R$ is a sparse polynomial. Conversely, if $Q$ has few nonzero monomials, $R = F-GQ$ as well. As a
result, we observe that the sparsities of the quotient and the remainder in the Euclidean division of $F$ by $G = 1+X^k G_1$ are
polynomially related.

\subsection{Algorithmic results}

Let $F$, $G\in\K[X]$ of respective degrees $m+n-1$ and $m$, with $n = \gO(m)$. Results of the previous section show that if
$G = X^m - G_0$ with $\deg(G_0)\le m-k$ for some $k = \gO(m)$, the sparsity of the quotient $F\bquo G$ is polynomially bounded in
the input size. If $G = 1+X^kG_1$, the same holds for the quotient $F^\star\bquo G^\star$. In both cases, this implies that one
can check whether $G$ divides $F$ by a mere application of the Euclidean division algorithm.  Our aim is to extend this approach
to a larger family of divisors $G$ through a generalization of Lemma~\ref{lem:sparsitybound}. It is based on the following lemma.

\begin{lem}\label{lem:div}
Let $F$, $G$ and $C\in\K[X]$, $C\neq 0$. Then $G$ divides $F$ if and only if $G$ divides $FC$ and $C$ divides $FC/G$.
\end{lem}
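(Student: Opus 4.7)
The plan is to prove both directions directly, exploiting the fact that $\K[X]$ is an integral domain so that we may cancel the nonzero factor $C$.

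For the forward direction, I would assume $G \mid F$ and write $F = GQ$ for some $Q \in \K[X]$. Then $FC = GQC$ immediately gives $G \mid FC$, and moreover $FC/G = QC$, which is obviously divisible by $C$. So both conclusions follow by a single substitution.

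For the reverse direction, I would assume $G \mid FC$ and $C \mid FC/G$. Setting $H = FC/G$, the first hypothesis ensures $H \in \K[X]$ and $FC = GH$. The second hypothesis lets me write $H = CQ$ for some $Q \in \K[X]$. Substituting back gives $FC = GCQ$. Since $\K[X]$ is an integral domain and $C \neq 0$, I may cancel $C$ from both sides to conclude $F = GQ$, i.e., $G \mid F$.

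There is no real obstacle: the statement is essentially an algebraic identity that reduces to cancellation in an integral domain. The only subtlety worth flagging explicitly is the use of $C \neq 0$, without which the cancellation step would be invalid.
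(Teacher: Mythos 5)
Your proof is correct and follows essentially the same route as the paper's: both directions are handled by substitution, with the converse reducing to cancelling the nonzero factor $C$ in the integral domain $\K[X]$ (a step the paper leaves implicit in the phrase ``Hence $F = GQ_2$''). Making the cancellation explicit, as you do, is a fine clarification but not a different argument.
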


\begin{proof}
If $G$ divides $F$, then $G$ clearly divides $FC$. Writing $F = GQ_1$, it is also clear that $C$ divides $FC/G = Q_1C$.
Conversely, if $G$ divides $FC$ and $C$ divides $FC/G$, we can write $FC/G = CQ_2$. Hence $F = GQ_2$ and $G$ divides $F$.
\end{proof}

The generalization of Lemma~\ref{lem:sparsitybound} is given by the following lemma.

\begin{lem} \label{lem:gensparsitybound} Let $G\in\K[X]$ of degree $m$ and sparsity $\#G$, such that $G = G_0 - X^k G_1$ with
	$\deg(G_0) < k$ and $G(0)\neq 0$. Then for all $t$, $G_0^t/G\bmod X^{tk}$ has at most $\frac{1}{(t-1)!}(\#G+t-2)^{t-1}$ nonzero
  monomials.
\end{lem}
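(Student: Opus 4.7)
The plan is to mimic the proof of Lemma~\ref{lem:sparsitybound}, but use a geometric-series expansion that accounts for the fact that $G_0$ is no longer the constant $1$. Since $\deg(G_0)<k$, the constant term of $G$ equals $G_0(0)$, and the hypothesis $G(0)\ne 0$ therefore guarantees that both $G$ and $G_0$ are invertible in $\K[[X]]$. This is what will allow the formal manipulations below.

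Next, I would factor $G = G_0\bigl(1 - X^kG_1/G_0\bigr)$ in $\K[[X]]$ and expand the geometric series to obtain
\[ \frac{1}{G} \;=\; \sum_{i\ge 0}\frac{X^{ki}G_1^i}{G_0^{i+1}}. \]
Multiplying both sides by $G_0^t$ and splitting according to the sign of $t-i-1$ yields
\[ \frac{G_0^t}{G} \;=\; \sum_{i=0}^{t-1} X^{ki}\,G_1^i\,G_0^{t-i-1} \;+\; \sum_{i\ge t}\frac{X^{ki}G_1^i}{G_0^{i-t+1}}. \]
Every term of the second sum has order at least $tk$ in $X$, so the second sum vanishes modulo $X^{tk}$. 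Hence $G_0^t/G\bmod X^{tk}$ is obtained by reducing the polynomial $\sum_{i=0}^{t-1} X^{ki}G_1^iG_0^{t-i-1}$ modulo $X^{tk}$.

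It then remains to bound the support of this polynomial. Each term $X^{ki}G_1^iG_0^{t-i-1}$ can be rewritten as a product of $i$ factors from $X^kG_1$ and $t-i-1$ factors from $G_0$, for a total of $t-1$ factors, each contributing an exponent in $\supp(X^kG_1)\cup\supp(G_0)=\supp(G)$. Consequently the support of the whole polynomial lies in $S=\bigl\{\sum_{j=1}^{t-1}e_j : e_j\in\supp(G)\bigr\}$. The set $S$ is in bijection with a subset of the multisets of size $t-1$ drawn from $\supp(G)$, so $|S|\le\binom{\#G+t-2}{t-1}\le\frac{1}{(t-1)!}(\#G+t-2)^{t-1}$, which is the claimed bound.

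The only real subtlety is the bookkeeping for the $i\ge t$ tail: one must verify that dividing by $G_0^{i-t+1}$ still makes sense as a power series (which follows from $G_0(0)\ne 0$) and that every resulting monomial has degree at least $tk$, so that nothing contaminates the truncation. Beyond that, the argument is parallel to Lemma~\ref{lem:sparsitybound}, with the two-term divisor $G_0$ playing the role the constant $1$ played there.
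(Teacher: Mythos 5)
Your proof is correct and follows essentially the same route as the paper: expand $G_0^t/G$ as the geometric series $\sum_{i\ge 0}X^{ki}G_1^iG_0^{t-i-1}$, discard the $i\ge t$ tail modulo $X^{kt}$, and bound the support by sums of $t-1$ exponents of $\supp(G)$, giving $\binom{\#G+t-2}{t-1}\le\frac{1}{(t-1)!}(\#G+t-2)^{t-1}$. The only (harmless) difference is that you bound the support of each term directly by the multiset count, while the paper phrases the same count via the expansion of $G^{t-1}$, and you spell out the power-series invertibility of $G_0$ and the valuation of the tail slightly more explicitly.
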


\begin{proof}
  Expanding $G_0/G = 1/(1-X^kG_1G_0^{-1}) = \sum_{i\ge 0} X^{ki}G_1^iG_0^{-i}$, we get 
  $G_0^t/G = \sum_{i\ge 0} X^{ki}G_1^i G_0^{t-i-1}$ for all $t$. Since $X^{ki}\bmod X^{kt} = 0$ for $i\ge t$, 
  \[G_0^t/G\bmod X^{kt} = \sum_{i=0}^{t-1} X^{ki} G_1^i G_0^{t-1-i} \bmod X^{kt}.\] The support of $G_0^t/G\bmod X^{kt}$ is also a
  \sloppy
  subset of $S$ defined in the proof of Lemma~\ref{lem:sparsitybound} since
  $G^{t-1}= \sum_{i=0}^{t-1} \binom{t-1}{i} (-1)^iX^{ki} G_1^i G_0^{t-1-i}$.  Therefore, its sparsity is at most
  $ \frac{1}{(t-1)!}(\#G+t-2)^{t-1}$.
\end{proof}

\begin{thm} \label{thm:divisibility} Let $F$ and $G\in\K[X]$ be two sparse polynomials, of degrees $m+n-1$ and
  $m$ respectively, and sparsity at most $T$. One can check whether $G$ divides $F$ in polynomial time if $G = G_0 - X^k
  G_1$ where $k-\deg(G_0) = \Omega(n)$ and either $\deg(G_0)$ or
  $\deg(G_1)$ is bounded by a polynomial function of the input size.
\end{thm}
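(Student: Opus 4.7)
The plan is to apply Lemma~\ref{lem:div} with the multiplier $C = G_0^t$ for a well-chosen constant $t$, exploiting the fact that Lemma~\ref{lem:gensparsitybound} then guarantees polynomial sparsity for the intermediate quotient $FG_0^t/G$. First, I would reduce to the case where $\deg(G_0)$ is the quantity polynomially bounded (rather than $\deg(G_1)$) and $G(0)\neq 0$. The first reduction uses Remark~\ref{rem:exactdivreversal}: the reciprocal $G^\star = -G_1^\star + X^{m-\deg(G_0)}G_0^\star$ has the same structural form as $G$ but with the roles of $\deg(G_0)$ and $\deg(G_1)$ interchanged, while preserving the gap $k - \deg(G_0) = \Omega(n)$. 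The second reduction is standard: factor out the largest power of $X$ dividing $G$ and the corresponding one dividing $F$.

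Now set $t = \lceil n/(k - \deg(G_0))\rceil$, which is $\gO(1)$ by hypothesis. The polynomial $\phi := G_0^t/G \bmod X^{tk}$ can be computed explicitly as $\sum_{i=0}^{t-1} X^{ki} G_1^i G_0^{t-1-i}$ and has polynomial sparsity by Lemma~\ref{lem:gensparsitybound}. Then define $\hat P := F\cdot\phi \bmod X^{n + t\deg(G_0)}$, which has polynomial sparsity (at most $\#F\cdot\#\phi$). The key claim is that $\hat P$ agrees with the true quotient $FG_0^t/G$ whenever the latter is a polynomial: if $G \mid FG_0^t$, then $\hat Q := FG_0^t/G$ has degree exactly $n-1 + t\deg(G_0)$ and equals $F\cdot(G_0^t/G)$ in $\K[[X]]$, so truncating modulo $X^{n+t\deg(G_0)}$---which is at most $X^{tk}$ by the choice of $t$---yields $\hat Q = \hat P$.

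Finally, by Lemma~\ref{lem:div}, it remains to check (a) $\hat P \cdot G = F \cdot G_0^t$ and (b) $G_0^t \mid \hat P$. Both sides of (a) have polynomial sparsity and are directly comparable by sparse multiplication (or verified via \verif). For (b), the divisor $G_0^t$ has polynomially bounded degree $t\deg(G_0)$, so Proposition~\ref{prop:basecases} applies in polynomial time.

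The main obstacle is the tight bookkeeping that ties together all these sparsity bounds: the polynomial-time guarantee ultimately relies on $t = \gO(1)$, which in turn depends crucially on the gap hypothesis $k - \deg(G_0) = \Omega(n)$, and on the truncation $X^{n + t\deg(G_0)}$ fitting below $X^{tk}$ so that Lemma~\ref{lem:gensparsitybound} is applicable at the right scale. Once these inequalities are verified, correctness follows directly from Lemma~\ref{lem:div}, including the case $G\nmid F$: either (a) fails (when $G\nmid FG_0^t$), or (a) passes and (b) then faithfully distinguishes divisibility of $F$ by $G$.
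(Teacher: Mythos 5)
Your proposal is correct and follows essentially the same route as the paper: reduce to the case $\deg(G_0)$ polynomially bounded via reciprocals and stripping powers of $X$, apply Lemma~\ref{lem:div} with $C=G_0^t$ for $t=\lceil n/(k-\deg G_0)\rceil=\gO(1)$, bound the sparsity of $FG_0^t/G$ through Lemma~\ref{lem:gensparsitybound}, and finish the check $G_0^t\mid FG_0^t/G$ with Proposition~\ref{prop:basecases}. The only (harmless) implementation difference is that you obtain the candidate quotient as a truncated power-series product $\hat P$ and certify $G\mid FG_0^t$ by testing $\hat P\,G=FG_0^t$, whereas the paper computes $(FG_0^t)^\star\bquo G^\star$ by Euclidean division and checks that the remainder vanishes.
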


\begin{proof}
  We first note that we can first remove any power of $X$ that divides $F$ or $G$. If $X^a$ divides $F$ and $X^b$ divides $G$,
  then $G$ divides $F$ if and only if $b\le a$ and $G/X^b$ divides $F/X^a$. Therefore, we assume from now on that $G(0)$ and
  $F(0)$ are nonzero. This implies in particular that $G$ and $G_0$ are both invertible in the ring of power series over $\K$. We
  treat the case $\deg(G_0) = (T\log(m+n))^{\gO(1)}$. The second case is directly obtained by taking reciprocals.

  By Lemma~\ref{lem:div}, for any integer $t\ge 0$, $G$ divides $F$ if and only if $G$ divides $G_0^tF$ and $G_0^t$ divides
  $FG_0^t/G$. Our algorithm 
  checks these conditions for some $t$ such that $k-\ell\ge n/t$, where
  $\ell = \deg(G_0)$.

	By Lemma~\ref{lem:gensparsitybound}, $G_0^t/G\bmod X^{kt}$ has at most $\frac{1}{(t-1)!}(T+t-2)^{t-1}$ nonzero terms, whence
	$FG_0^t/G\bmod X^{kt}$ at most $\frac{1}{(t-1)!}T(T+t-2)^{t-1}$. Note that $t = \gO(1)$ since $k-\ell=\Omega(n)$, and 
  that $kt\ge n+\ell t$. Since $G_0^t/G\bmod X^{n+\ell t} = ((FG_0^t)^\star\bquo G^\star)^\star$, the sparsity of
  $(FG_0^t)^\star\bquo G^\star$ is at most $T^{\gO(1)}$. One can compute this quotient and check whether the
  remainder vanishes to test in polynomial time if $G$ divides $FG_0^t$. If the test fails, $G$ does not divide $F$. Otherwise,
  we have computed a polynomial $Q_0$ such that $FG_0^t = Q_0G$. It remains to check whether $G_0^t$ divides $Q_0$. Proposition~\ref{prop:basecases} 
  provides a polynomial-time algorithm for this since $\deg(G_0^t)$ is polynomially bounded.
\end{proof}

The previous proof extends to more general divisors. It only requires a polynomial
bound on the sparsity of $Q_0$ and a polynomial-time algorithm to test whether $G_0^t$ divides $Q_0$. The second step
can be a recursive call if $G_0^t$ satisfies the conditions in the theorem. This provides the following generalization of the theorem.

\begin{cor}\label{cor:divisibility}
  Let $F$ and $G\in\K[X]$ be two sparse polynomials, of degrees $m+n-1$ and $m$ respectively, and sparsity at most $T$.  One can
  check whether $G$ divides $F$ in polynomial time if $G = G_0 + X^k G_1 - X^\ell G_2$ with $G_0$, $G_1$, $G_2\in\K[X]$ such that
  $k-\deg(G_0)$ and $\ell-k-\deg(G_1)$ are both $\Omega(n)$ and $\deg(G_1) = (T\log(m+n))^{\gO(1)}$.
\end{cor}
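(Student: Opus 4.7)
The plan is to extend the proof of Theorem~\ref{thm:divisibility} by grouping the first two chunks of $G$: set $\tilde G_0 := G_0 + X^k G_1$, so that $G = \tilde G_0 - X^\ell G_2$ has gap $\ell - \deg(\tilde G_0) = \ell - k - \deg(G_1) = \Omega(n)$. Lemma~\ref{lem:gensparsitybound} applied with $\tilde G_0$ playing the role of the low chunk then bounds the sparsity of $\tilde G_0^t/G \bmod X^{t\ell}$ by $\frac{1}{(t-1)!}(T+t-2)^{t-1}$ for every $t$.

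First I would pick $t = \lceil n/(\ell - k - \deg(G_1))\rceil = \gO(1)$, chosen so that $t\ell$ strictly exceeds $\deg(F\tilde G_0^t/G) = n - 1 + t\deg(\tilde G_0)$, and apply Lemma~\ref{lem:div} with $C = \tilde G_0^t$. This reduces $G \mid F$ to the two conditions (i) $G \mid F\tilde G_0^t$ and (ii) $\tilde G_0^t \mid F\tilde G_0^t/G$. Condition (i) and the computation of the sparse quotient $Q_0 := F\tilde G_0^t/G$ (of polynomial sparsity $T^{\gO(1)}$) are handled exactly as in the proof of Theorem~\ref{thm:divisibility}, via reciprocals and sparse Euclidean division: if the division of $(F\tilde G_0^t)^\star$ by $G^\star$ is not exact, then $G \nmid F$; otherwise, we obtain $Q_0$.

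For condition (ii), testing $\tilde G_0^t \mid Q_0$, I would use the recursive call indicated just before the statement. The polynomial $\tilde G_0 = G_0 - X^k(-G_1)$ itself has the shape of Theorem~\ref{thm:divisibility}, with $\deg(-G_1) = (T\log(m+n))^{\gO(1)}$ polynomial and gap $k - \deg(G_0) = \Omega(n)$. Moreover the bookkeeping gives $\deg(Q_0) - \deg(\tilde G_0^t) + 1 = n$, so the $n$-parameter of the inner divisibility matches the original. I would therefore iterate Theorem~\ref{thm:divisibility} $t = \gO(1)$ times: at each stage $j$, test $\tilde G_0 \mid Q_j$ and, if divisible, set $Q_{j+1} := Q_j/\tilde G_0$; declare success iff every stage succeeds.

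The main obstacle will be controlling the gap condition across the $t$ iterations: the $n$-parameter at stage $j$ is $n + (t-j-1)\deg(\tilde G_0)$, which can exceed $n$ when $\deg(\tilde G_0) = k + \deg(G_1)$ is much larger than $n$. The cleanest way to handle this is to combine the reciprocal trick $G \mapsto G^\star$ (as already used in Theorem~\ref{thm:divisibility} to ensure WLOG that the smaller of $\deg(G_0)$ and $\deg(G_2)$ plays the role of $G_0$) with a direct invocation of Corollary~\ref{cor:divisibility} itself on the three-chunk polynomial $\tilde G_0^t = G_0^t + X^k(\cdot) - (-X^{tk}G_1^t)$ obtained from the binomial expansion---whose outer gaps inherit $\Omega(n)$ from those of $G$ up to a constant depending on $t$, and whose middle chunk has polynomial degree whenever $\deg(G_0)$ does.
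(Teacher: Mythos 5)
Your reduction up to the computation of $Q_0$ coincides with the paper's: group $\tilde G_0=G_0+X^kG_1$, take $t=\lceil n/(\ell-k-\deg G_1)\rceil=\gO(1)$, use Lemma~\ref{lem:div} with $C=\tilde G_0^{\,t}$ together with Lemma~\ref{lem:gensparsitybound}, and obtain the sparse quotient $Q_0=F\tilde G_0^{\,t}\bquo G$ through reciprocals. The genuine gap is in your treatment of condition (ii). The iterative scheme ``test $\tilde G_0\mid Q_j$ and replace $Q_j$ by $Q_j/\tilde G_0$, repeated $t$ times'' cannot work: when $G$ divides $F$ the successive quotients are $Q_j=(F/G)\,\tilde G_0^{\,t-j}$, and the last one is $F/G$ itself, which may have exponentially many terms --- avoiding its computation is the whole point of multiplying by $\tilde G_0^{\,t}$ in the first place. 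The sparsity guarantee of Lemma~\ref{lem:gensparsitybound} only covers the full power $t$; for $j\ge 1$ the truncation order $(t-j)\ell$ no longer dominates $\deg\bigl(F\tilde G_0^{\,t-j}/G\bigr)$, so the intermediate quotients need not even be representable in polynomial size. The parameter inflation you flag (degree difference $n+(t-j-1)\deg\tilde G_0$ at stage $j$ versus a gap of only $\Omega(n)$) is the symptom of this and is fatal, not a bookkeeping nuisance. Your proposed patch does not repair it: in the three-chunk expansion of $\tilde G_0^{\,t}$ the middle block $\sum_{i=1}^{t-1}\binom{t}{i}X^{ki}G_1^iG_0^{t-i}$ has degree involving $(t-1)k$ and $\deg(G_0)$, neither of which is $(T\log(m+n))^{\gO(1)}$ in general (the corollary never assumes $\deg(G_0)$ small, so you cannot rely on ``whenever $\deg(G_0)$ does''); the lower gap $k-t\deg(G_0)$ can even be negative, so the blocks may overlap; and a recursive call to the corollary itself would need a strictly decreasing parameter that you have not exhibited.

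What does work --- and is the paper's proof --- is a \emph{single} application of Theorem~\ref{thm:divisibility} to the pair $(Q_0,H)$ with $H=\tilde G_0^{\,t}$. The binomial expansion gives $H=H_0+X^{kt}G_1^t$ with $\deg(H_0)\le k(t-1)+(t-1)\deg(G_1)+\deg(G_0)$, hence a gap $kt-\deg(H_0)\ge k-\deg(G_0)-(t-1)\deg(G_1)=\Omega(n)$, after the harmless reduction to the case $T\log(m+n)=n^{o(1)}$ (which you should state explicitly, since otherwise Proposition~\ref{prop:basecases} already settles the problem); the top chunk $G_1^t$ has degree $t\deg(G_1)=(T\log(m+n))^{\gO(1)}$. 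Since $\deg(Q_0)-\deg(H)+1=n$, the degree-difference parameter of this subproblem is again $n$, so Theorem~\ref{thm:divisibility} (in its reciprocal case) decides $H\mid Q_0$ in polynomial time without ever forming $F/G$ or any intermediate power of $\tilde G_0$.
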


\begin{proof}
We assume that $T\log(m+n) = n^{o(1)}$. Otherwise, one can use Proposition~\ref{prop:basecases}.
Using Lemma~\ref{lem:gensparsitybound}, $F(G_0+X^kG_1)^t/G\bmod X^{kt}$ has at most $T^{\gO(1)}$ nonzero monomials for $t = \gO(1)$. Therefore, as
previously, we can compute the quotient $(F(G_0+X^kG_1)^t)^\star\bquo G^\star$ for $t = \lceil n/(\ell-k-\deg(G_1))\rceil$, in
polynomial time. If the remainder is nonzero, $G$ does not divide $F$.
Otherwise, we have computed a polynomial $Q_{01}$ such that $F(G_0+X^k G_1)^t = Q_{01}G$. It remains to test whether
$H = (G_0+X^kG_1)^t$ divides $Q_{01}$. We show that the polynomial $H$ satisfies the conditions of Theorem~\ref{thm:divisibility}. Let
us write
\[H = \sum_{i=0}^t \binom{t}{i} X^{ki}G_1^i G_0^{t-i} = X^{kt}G_1^t + \sum_{i=0}^{t-1}\binom{t}{i} X^{ki}G_1^iG_0^{t-i} = X^{kt}G_1^t + H_0\]
where $H_0$ has degree at most $k(t-1)+\deg(G_1)(t-1)+\deg(G_0)$. Then $kt-\deg(H_0)\ge k-\deg(G_0)-(t-1)\deg(G_1) = \Omega(n)$
since $k-\deg(G_0) = \Omega(n)$ and $\deg(G_1) = (T\log (m+n))^{\gO(1)} = n^{o(1)}$. One can test whether $H$
divides $Q_{01}$ in polynomial-time using Theorem~\ref{thm:divisibility}.
\end{proof}

Theorem~\ref{thm:divisibility} and Corollary~\ref{cor:divisibility} cover cases were the quotient of the polynomials and the quotient of their reciprocals are both dense, as shown in the following example.

\begin{example}
Let $F=X^{2n-1}-X^n-X^{n-1}+X^2-X+1$ and $G=G_0 + X^{n-1}G_1$ where $G_0 = G_1 = 1-X$. Then $F\bquo G =\sum_{i=0}^{n-1}X^i$ and $F^\star\bquo G^\star = X^{n-1} + \sum_{i=0}^{n-3} X^i$.
\end{example}

\section*{Acknowledgments}
We are grateful to the reviewers for their insightful comments.

\newcommand{\Gathen}{\relax}\newcommand{\Hoeven}{\relax}

\end{document}